\newtheorem{thm}{Theorem}
\newtheorem{prop}[thm]{Proposition}
\theoremstyle{definition}
\theoremstyle{remark}
\newtheorem{rem}{Remark}
\newcounter{nestedstate}[algorithm]
\begin{document}
\newtheorem{Remark}{Remark}
\newtheorem{remark}{Remark}
\renewcommand{\algorithmicrequire}{\textbf{Input:}} 
\renewcommand{\algorithmicensure}{\textbf{Output:}}

\title{Optimal Beamforming of RIS-Aided Wireless Communications: An Alternating Inner Product Maximization Approach}

\author{Rujing~Xiong,~\IEEEmembership{Student Member,~IEEE,}
Tiebin~Mi,~\IEEEmembership{Member,~IEEE,}
Jialong~Lu,~\IEEEmembership{Student Member,~IEEE,}
Ke~Yin,
Kai~Wan,~\IEEEmembership{Member,~IEEE,}
Fuhai~Wang,~\IEEEmembership{Student Member,~IEEE,}
Robert~Caiming~Qiu,~\IEEEmembership{Fellow,~IEEE,}
\thanks{R.~Xiong, T.~Mi, J.~Lu, K.~Wan ,F.~Wang and R.~C.~Qiu are with the School of Electronic Information and Communications, Huazhong University of Science and Technology, Wuhan 430074, China (e-mail:~{\{rujing, mitiebin, m202272434, kai\_wan, wangfuhai, caiming\}@hust.edu.cn)}.}
\thanks{K.~Yin is with the Center for Mathematical Sciences, Huazhong University of Science and Technology, Wuhan 430074, China (e-mail: kyin@hust.edu.cn).}
\thanks{A short version of this paper was presented in the 2024 IEEE Wireless Communications and Networking Conference (WCNC 2024).}
}



\maketitle

\begin{abstract}

This paper investigates a general discrete $\ell_p$-norm maximization problem, with the power enhancement at steering directions through reconfigurable intelligent surfaces (RISs) as an instance. We propose a mathematically concise iterative framework composed of alternating inner product maximizations, well-suited for addressing $\ell_1$- and $\ell_2$-norm maximizations with either discrete or continuous uni-modular variable constraints. The iteration is proven to be monotonically non-decreasing. Moreover, this framework exhibits a distinctive capability to mitigate performance degradation due to discrete quantization, establishing it as the first post-rounding lifting approach applicable to any algorithm intended for the continuous solution. Additionally, as an integral component of the alternating iterations framework, we present a divide-and-sort (DaS) method to tackle the discrete inner product maximization problem. In the realm of $\ell_\infty$-norm maximization with discrete uni-modular constraints, the DaS ensures the identification of the global optimum with polynomial search complexity. We validate the effectiveness of the alternating inner product maximization framework in beamforming through RISs using both numerical experiments and field trials on prototypes. The results demonstrate that the proposed approach achieves higher power enhancement and outperforms other competitors.  Finally, we show that discrete phase configurations with moderate quantization bits (e.g., 4-bit) exhibit comparable performance to continuous configurations in terms of power gains.


\end{abstract}

\begin{IEEEkeywords}
$\ell_p$-norm maximization, reconfigurable intelligent surfaces (RISs), discrete phase configuration, uni-modular constraints, post-rounding lifting, prototype experiment.
\end{IEEEkeywords}

\section{Introduction}\label{Section1}
\IEEEPARstart{T}{he} reconfigurable intelligent surface (RIS) technique has recently demonstrated its great potential for reconfiguring wireless propagation environments \cite{cui2014coding,basar2019wireless,di2020smart}. 
The advantage, as compared to other competitive technologies, lies in the fact that RISs provide opportunities for the so-called passive relays. Specifically, RISs consist of a large number of carefully designed electromagnetic units and result in electromagnetic waves with dynamically controllable behaviors such as amplitude, phase, and polarization. Moreover, RIS operates passively, leading to reductions in both hardware costs and energy consumption.

RIS has attracted considerable attention in various wireless systems, including multi-antenna and/or multi-user communication~\cite{di2020hybrid,chen2023channel}, physical-layer security~\cite{yang2020secrecy,cui2019secure}, orthogonal frequency division multiplexing-based wideband communication~\cite{lin2020adaptive,yang2020intelligent}, unmanned aerial vehicle communication and networks ~\cite{li2020reconfigurable,mu2021intelligent}, simultaneous wireless information and power transfer systems~\cite{pan2020intelligent,wu2020joint}, mobile edge computing~\cite{bai2020latency,mao2022reconfigurable}, wireless sensing and location~\cite{elzanaty2021reconfigurable,chen2022efficient}, etc.

In reality, achieving the proper configuration of RISs is constrained by various limitations. Among these, one remarkable hardware restriction is the adoption of extremely low quantization bits \cite{xiong2023ris,arun2020rfocus,tran2020demonstration, pei2021ris,dai2020reconfigurable,rains2022high,kishiyama2021research}. The basic control module of reflecting units typically relies on varactors or positive intrinsic-negative (PIN) diodes. The use of varactors presents a challenge due to analog control, resulting in a degradation of phase accuracy and response time. On the other hand, integrating PIN diodes in multi-bit reflecting units increases design complexity and hardware costs. As a result, the 1-bit phase configuration quantization remains the primary choice in the majority of current prototypes.

In many applications, beamforming through RIS can be formulated as maximizing the power at specific steering directions\cite{wu2019intelligent,wu2023intelligent}. The optimization of phase configurations encounters two critical constraints. The first restriction is the uni-modular phase configuration, originating from the resonant nature of electrical circuits. This constraint imposes non-convexity onto the feasible set. The second involves discrete phase configuration for practical implementations, further reducing the feasible region into a discrete set. Mathematically, this beamforming functionality represents an instance of $\ell_2$-norm maximization problem with discrete uni-modular constraints, formulated as
\begin{equation*}
  (Q_2) \qquad \max_{ \mathbf{\Omega} \in \Delta^n } \lVert \mathbf{A} e^{j \mathbf{\Omega} } \rVert_2,
\end{equation*}
where $\mathbf{A} \in \mathbb{C}^{m \times n}$ represents the channel information and $e^{j \mathbf{\Omega}} = [e^{j \Omega_1}, \cdots, e^{j\Omega_n}]^T$ denotes the phase configurations. The feasible set is denoted as $\Delta = \left\{ 0, \delta, \ldots, (2^B-1) \delta \right\}$, where $B$ represents the number of quantization bits and $\delta= 2 \pi / 2^B$.

This paper addresses a general $\ell_p$-norm maximization problem ($p=1, 2, \infty$), given by
\begin{equation*}
  (Q_p) \qquad \max_{ \mathbf{\Omega} \in \Delta^n } \lVert \mathbf{A} e^{j \mathbf{\Omega} } \rVert_p .
\end{equation*}
It is essential to recognize that discrete optimization problems $(Q_p)$ tend to fall within the universal (NP-hard) category, requiring expensive exponential search techniques. Notably, the continuous counterpart of $(Q_2)$ remains strongly NP-hard in general \cite{zhang2006complex, soltanalian2014designing}. Various methods have been developed to address the (continuous) uni-modular constraint~\cite{liang2016unimodular,yang2021beamforming,wang2022sca,kumar2022novel,wu2019intelligent,cui2019secure,zhou2020robust,shen2019secrecy,bai2021reconfigurable,salem2022active,absil2008optimization,yu2019miso}. Among these methods, semidefinite relaxation-semidefinite program (SDR-SDP) and manifold optimization (Manopt) stand out as representative approaches.

In the SDR-SDP approach~\cite{wu2019intelligent,cui2019secure,zhou2020robust}, the continuous counterpart of $(Q_2)$ is equivalent to semidefinite programming with the variable being a rank one positive semidefinite matrix. The uni-modular constraint is implemented through imposing restrictions on the diagonal elements of the matrix variable. By relaxing the rank-one constraint, the problem evolves into a well-defined semidefinite programming, which can be solved using an interior-point algorithm. It is noteworthy that while SDR-SDP can handle many nonconvex quadratically constrained quadratic programs, it is not applicable to address $(Q_1)$ and $(Q_\infty)$.

The Manopt approach~\cite{absil2008optimization,yu2019miso} deals directly with optimization over Riemannian manifolds. Most well-defined manifolds are equipped with computable retraction operators, providing an efficient method of pulling points from the tangent space back onto the manifold. In the context of $(Q_2)$, the feasible set of its continuous counterpart comprises a product of $n$ complex tori, forming an embedded submanifold of $\mathbb{C}^n$. Within the Manopt framework, there exists a straightforward way to project the Euclidean gradient $-2 A^H A x$ to the Riemannian gradient of the product manifold of $n$ tori. However, even with Manopt, there is no guarantee of finding a global minimizer due to the nonconvex nature of $(Q_2)$, as well as the nonconvex and nonsmooth characteristics of $(Q_1)$ and $(Q_\infty)$.

In accommodating the discrete configuration constraint, the most straightforward approach involves \emph{hard} rounding the solution to its continuous counterpart \cite{wang2020intelligent,zheng2020intelligent,you2020channel,qiao2020secure}. However, hard rounding may lead to performance degradation~\cite{wu2019beamforming}, and in the worst-case scenario, it can even result in arbitrarily poor performance~\cite{zhang2022configuring}. Moreover, only a limited number of algorithms have been developed to reduce the search set from exponential to polynomial complexity~\cite{di2020hybrid,zhang2022configuring,ren2022linear}, enabling the direct search of optimal discrete solutions. Discovering such algorithms can often be highly problem-specific and exceptionally challenging.

An overlooked strategy is the post-rounding lifting approach, which addresses the performance degradation caused by hard rounding. This method becomes particularly crucial when dealing with extremely low-bit phase configurations, where the loss due to hard rounding can be significant. Despite its potential effectiveness, this strategy is completely disregarded in existing literature. To our knowledge, there are no related approaches currently employing this strategy.

The main contributions can be summarized as follows:
\begin{itemize}
\item Leveraging the inherent structure of $(Q_p)$, we propose a mathematically concise iterative framework composed of alternating inner product maximizations. This framework is designed to address $\ell_1$- and $\ell_2$-norm maximization with either discrete or continuous uni-modular constraints. Within the iteration, one step guarantees the exact satisfaction of uni-modular constraints, whether they are discrete or continuous. The iteration is proven to be monotonically non-decreasing and exhibits effectiveness in handling large-scale problems. Moreover, it possesses the unique capability to mitigate performance degradation resulting from the hard rounding of continuous solutions. Remarkably, our proposed iteration turns out to be the first post-rounding lifting approach applicable to all algorithms designed for continuous solutions.


\item We introduce a divide-and-sort (DaS) method to address the discrete inner product maximization problem, which constitutes a crucial step in the proposed iterative framework \footnote{Upon revising the initial version of this paper, we became aware of recent independent research \cite{ren2022linear}, which utilizes a similar approach to DaS for solving the discrete inner product maximization problem. Notably, their study does not address the more general problem $(Q_2)$.}. The proposed DaS is guaranteed to identify the optimal discrete solution.  Furthermore, in the case of $(Q_\infty)$, the DaS ensures the identification of the global optimum with polynomial search complexity.

\item We validate the effectiveness of the alternating inner product maximization framework in beamforming using  both numerical experiments and field trials. The results demonstrate that the proposed approach achieves higher power enhancement and outperforms other competitors. Finally, we show that discrete phase configurations with moderate quantization bits (e.g., 4-bit) have comparable performance to continuous configurations in RIS-aided communication systems.

\end{itemize}


\subsection{Outline}

The remainder of the paper is organized as follows. In Section~\ref{Section2}, we demonstrate that beamforming through RISs can be formulated as an $\ell_2$-norm maximization problem. Section~\ref{Section3} presents a concise mathematical framework comprising alternating inner product maximizations to address $(Q_p)$. In Section~\ref{Section4}, we introduce a divide-and-sort (DaS) search method to address the discrete inner product maximization problem, a crucial step in the proposed iterative framework. Section~\ref{Section6} is dedicated to the convergence behaviors and unique lifting capabilities. In Sections~\ref{Section7} and~\ref{Section7}, we conduct performance evaluations through numerical simulations and field trials. The paper is concluded in Section~\ref{Section9}.

\subsection{Reproducible Research}
The simulation results can be reproduced using codes available at: \url{https://github.com/RujingXiong/RIS\_Optimization.git}

\subsection{Notations}
The imaginary unit is denoted by $j$. The magnitude and real component of a complex number are represented by $|\cdot|$ and $\mathcal{\Re}\{\cdot\}$, respectively. Unless explicitly specified, lower and upper case bold letters represent vectors and matrices. The conjugate transpose, conjugate, and transpose of~$\mathbf{A}$ are denoted as $\mathbf{A}^H$, $\mathbf{A}^{*}$ and $\mathbf{A}^T$, respectively. ${\rm diag}(\cdot)$ refers to the diagonal matrix operator. 

\section{System Model and Beamforming Problem Formulation}\label{Section2}
We investigate a point-to-point multiple-input single-output (MISO) downlink communication system as illustrated in Fig.~\ref{Scenarios}, 
where a base station (BS) equipped with $M$ antennas serves a single-antenna user equipment (UE). In this scenario, linear transmit precoding is applied at the BS, allocating the UE a dedicated beamforming vector $\mathbf{w}$. The transmitted signal is represented as $\mathbf{x} = \mathbf{w} s$, where $s$ is the transmitted symbol.

To enhance communication quality, a RIS consisting of $N$ reflecting units is employed. By representing the phase configuration of the RIS as $e^{j \mathbf{\Omega}} = [e^{j \Omega_1}, \cdots, e^{j\Omega_N}]^T$, the received signal at the UE is given by
\begin{equation}\label{modelingLOS}
y = ( \mathbf{h}_{\text{UE-RIS}}^H \text{diag} (e^{j \mathbf{\Omega}} ) \mathbf{H}_{\text{RIS-BS}}+\mathbf{h}_d^H ) \mathbf{w} s + v,
\end{equation}
where the equivalent channel between RIS-BS and UE-RIS links are denoted by $\mathbf{H}_{\text{RIS-BS}} \in \mathbb{C}^{N \times M}$ and $\mathbf{h}_{\text{UE-RIS}}^H \in \mathbb{C}^{N}$, respectively. The direct link between the BS and the UE is represented by $\mathbf{h}_d^H\in \mathbb{C}^M$. The received signal is corrupted by additive Gaussian white noise $v \sim \mathcal{CN}(0,\sigma^2)$.

\begin{figure}[tbp]
  \centering
  \subfigure[]{
  \label{Scenarios:LOS}
  \includegraphics[width=.75\columnwidth]{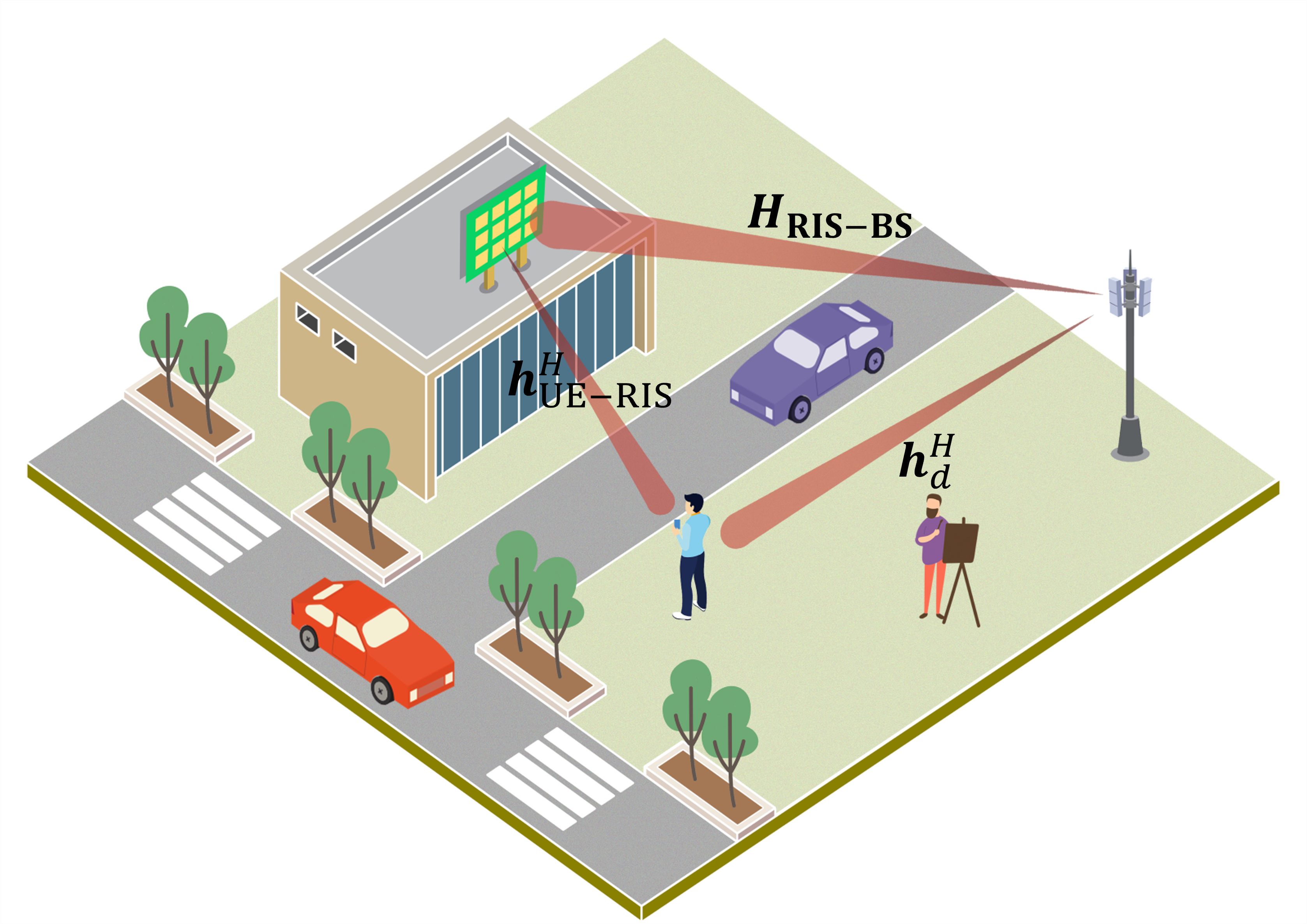}}
  \subfigure[]{
  \label{Scenarios:NLOS}
  \includegraphics[width=.75\columnwidth]{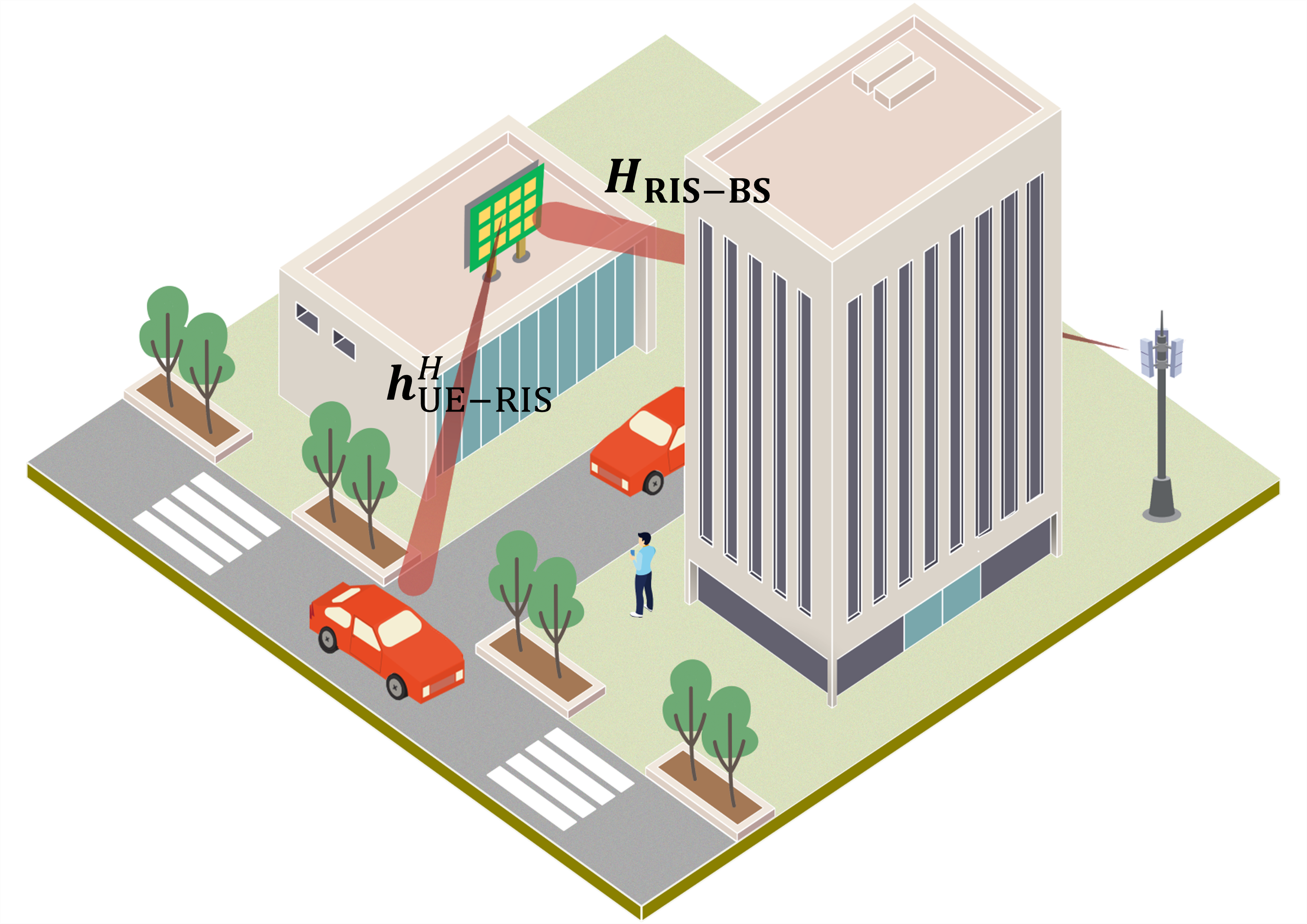}}
    \caption{RIS-aided point-to-point MISO communication. (a) line-of-sight (LoS) scenario. (b) non-line-of-sight (NLoS) scenario.} 
  \label{Scenarios}
\end{figure}

The objective of RIS-aided coverage enhancement is to maximize the received signal power. According to maximum ratio transmission principle \cite{wu2019intelligent, tse2005fundamentals}, the optimal transmit beamforming vector is given by
\begin{equation}
  \mathbf{w}= \sqrt{P} \frac{\left( \mathbf{h}_{\text{UE-RIS}}^H \text{diag} (e^{j \mathbf{\Omega}} ) \mathbf{H}_{\text{RIS-BS}}+\mathbf{h}_d^H \right)^H }{ \lVert \mathbf{h}_{\text{UE-RIS}}^H \text{diag} (e^{j \mathbf{\Omega}} ) \mathbf{H}_{\text{RIS-BS}}+\mathbf{h}_d^H \rVert},
\end{equation}
where $P$ denotes the transmit power of the BS. Employing this optimal transmit beamforming vector, the problem of maximizing the received  power can be formulated as
\begin{equation*}
 \max_{ \mathbf{\Omega} } \lVert \mathbf{h}_{\text{UE-RIS}}^H \text{diag} (e^{j \mathbf{\Omega}} ) \mathbf{H}_{\text{RIS-BS}}+\mathbf{h}_d^H \rVert_2 .
\end{equation*}
In addition, since
\begin{equation}\label{E:Phi}
  \mathbf{h}_{\text{UE-RIS}}^H \text{diag} ( e^{j \mathbf{\Omega}} ) \mathbf{H}_{\text{RIS-BS}} = e^{j \mathbf{\Omega}} \underbrace{ \text{diag} ( \mathbf{h}_{\text{UE-RIS}}^H ) \mathbf{H}_{\text{RIS-BS}} }_{ = \mathbf{\Phi} } ,
\end{equation}
we can rewrite the received signal as
\begin{equation}
  \mathbf{h}_{\text{UE-RIS}}^H \text{diag} (e^{j \mathbf{\Omega}} ) \mathbf{H}_{\text{RIS-BS}}+\mathbf{h}_d^H = 
  \begin{bmatrix} 
    \mathbf{\Phi}^T & \mathbf{h}_d^H 
  \end{bmatrix} 
  \begin{bmatrix}
    e^{j \mathbf{\Omega}} \\
    1
  \end{bmatrix} .
\end{equation}

In practical implementations, the phase configuration is typically selected from a finite set of values $\Delta = \left\{ 0, \delta, \ldots, (2^B-1) \delta \right\}$, where $B$ represents the number of bits of quantization and $\delta= 2 \pi / 2^B$. The optimal beamforming can be  expressed as
\begin{equation}\label{P3}
  \max_{ \mathbf{\Omega} \in \Delta^N } \left \lVert 
  \begin{bmatrix} 
    \mathbf{\Phi}^T & \mathbf{h}_d^H 
  \end{bmatrix} 
  \begin{bmatrix}
    e^{j \mathbf{\Omega}} \\
    1
  \end{bmatrix} \right \rVert_2 .
\end{equation}
For a more general optimization problem 
\begin{equation}\label{P4}
\max_{ \mathbf{\Omega}, \Omega_{N+1} \in \Delta^{N+1} } \left \lVert 
  \begin{bmatrix} 
    \mathbf{\Phi}^T & \mathbf{h}_d^H 
  \end{bmatrix} 
  \begin{bmatrix}
    e^{j \mathbf{\Omega}} \\
    e^{j \Omega_{N+1}}
  \end{bmatrix} \right \rVert_2 .
\end{equation}
We denote the solution as $\mathbf{\Omega}_{\text{opt}}, \Omega_{N+1, \text{opt}}$. Consequently, the optimal solution for \eqref{P3} can be represented as $\mathbf{\Omega}_{\text{opt}} - \Omega_{N+1, {\text{opt}}}$. Indeed, \eqref{P4} is a specific instance of a broader optimization problem $(Q_2)$.

In certain scenarios, as depicted in Fig.~\ref{Scenarios:NLOS}, the link between the BS and the UE is completely blocked. In such non-line-of-sight (NLOS) conditions, employing RIS to enhance signal coverage leads to an optimization problem formulated as
\begin{equation}\label{Optimization Problem}
\max_{ \mathbf{\Omega} \in \Delta^N } \lVert \mathbf{\Phi }^T e^{j \mathbf{\Omega} } \rVert_2,
\end{equation}
which is evidently an instance of $(Q_2)$.

\section{An Alternating Maximization Framework}\label{Section3}

The $\ell_p$-norm maximization $(Q_p)$ with discrete uni-modular constraints holds fundamental importance in various applications, particularly in engineering problems related to quantization. One approach to solving $(Q_p)$ is to perform an exhaustive search among all elements in $\Delta^{n}$, with an intractable computation complexity of $\mathcal{O}(2^{B n})$. In order to reduce the complexity, we propose a mathematically concise iterative framework consisting of alternating inner product maximizations. This framework is specifically designed to address $\ell_1$- and $\ell_2$-norm maximization problems with either discrete or continuous uni-modular constraints.

Let $\lVert \cdot \rVert_q$ represent $\ell_q$-norm ($q \ge 1$) on $\mathbb{C}^{n}$ and $\langle \cdot, \cdot \rangle$ denote the inner product. The associated dual norm $\lVert \cdot \rVert_{p}$ is denoted as 
\begin{equation}\label{E:DualNorm}
  \lVert \mathbf{x} \rVert_{p} = \sup \{ \lvert \langle \mathbf{z}, \mathbf{x} \rangle \rvert : \lVert \mathbf{z} \rVert_q = 1 \} ,
\end{equation}
where $1/p + 1/q = 1$. By setting $p=2$ and $q=2$, we know that the dual of the Euclidean norm is the Euclidean norm, given by
\begin{equation}
  \lVert \mathbf{x} \rVert_{2} = \sup \{ \lvert \mathbf{z}^H \mathbf{x} \rvert : \lVert \mathbf{z} \rVert_2 = 1 \} .
\end{equation}

To reformulate the original maximization $(Q_p)$, we introduce a dual variable $\{ \mathbf{z} : \lVert \mathbf{z} \rVert_q = 1 \}$. This yields
\begin{equation}
\begin{aligned}
\max_{ \mathbf{\Omega} \in \Delta^n } \lVert \mathbf{A} e^{j \mathbf{\Omega} } \rVert_p 
& = \max_{ \mathbf{\Omega} \in \Delta^n } \sup_{ \lVert \mathbf{z} \rVert_q = 1 } \lvert \langle \mathbf{z} , \mathbf{A} e^{ j \mathbf{\Omega} } \rangle \rvert \\
& = \sup_{ \lVert \mathbf{z} \rVert_q = 1 } \max_{ \mathbf{\Omega} \in \Delta^n } \lvert \langle \mathbf{A}^H \mathbf{z} , e^{ j \mathbf{\Omega} } \rangle \rvert.
\end{aligned}
\end{equation}
Our objective becomes solving
\begin{equation}\label{Maxmax_1}
(\mathbf{z}_{\text{opt}}, \mathbf{\Omega}_{\text{opt}}) = \arg \max_{ \mathbf{\Omega} \in \Delta^n } \sup_{ \lVert \mathbf{z} \rVert_q = 1 } \lvert \langle \mathbf{z} , \mathbf{A} e^{ j \mathbf{\Omega} } \rangle \rvert
\end{equation}
or equivalently
\begin{equation}\label{Maxmax_2}
(\mathbf{z}_{\text{opt}}, \mathbf{\Omega}_{\text{opt}}) = \arg \sup_{ \lVert \mathbf{z} \rVert_q = 1 } \max_{ \mathbf{\Omega} \in \Delta^n } \lvert \langle \mathbf{A}^H \mathbf{z} , e^{ j \mathbf{\Omega} } \rangle \rvert.
\end{equation}

The reformulation appears to introduce more complexity as \eqref{Maxmax_1} and \eqref{Maxmax_2} expand the feasible set from $\Delta^n$ to $\Delta^n \times \{ \lVert \cdot \rVert_q = 1 \}$. However, the advantage lies in the fact that we can use alternating restricted maximizations over the subsets of variables. For instance, in \eqref{Maxmax_1}, if $\mathbf{\Omega}_{\text{opt}}$ is known, $\mathbf{z}_{\text{opt}}$ is the solution to a continuous inner product maximization problem
\begin{equation}
  \sup_{ \lVert \mathbf{z} \rVert_q = 1 } \lvert \langle \mathbf{z} , \mathbf{A} e^{ j \mathbf{\Omega}_{\text{opt}} } \rangle \rvert.
\end{equation}
On the other hand, if $\mathbf{z}_{\text{opt}}$ is given, $\mathbf{\Omega}_{\text{opt}}$ is the solution to a discrete uni-modular constrained inner product maximization problem  
\begin{equation}
  \max_{ \mathbf{\Omega} \in \Delta^n } \lvert \langle \mathbf{z}_{\text{opt}} , \mathbf{A} e^{ j \mathbf{\Omega} } \rangle \rvert = \max_{ \mathbf{\Omega} \in \Delta^n } \lvert \langle \mathbf{A}^H \mathbf{z}_{\text{opt}} , e^{ j \mathbf{\Omega} } \rangle \rvert.
\end{equation}
We now reach an iterative procedure involving alternating inner product maximization, defined by the following steps 
\begin{subnumcases}
  \mathbf{z}_k = \arg \sup_{ \lVert \mathbf{z} \rVert_q = 1 } \lvert \langle \mathbf{z} , \mathbf{A} e^{ j \mathbf{\Omega}_k } \rangle \rvert, \label{E:Arg_z} \\
  \mathbf{\Omega}_{k+1} = \arg \max_{ \mathbf{\Omega} \in \Delta^n } \lvert \langle \mathbf{A}^H \mathbf{z}_{k} , e^{ j \mathbf{\Omega} } \rangle \rvert . \label{E:Arg_Omega}
\end{subnumcases}

The following theorem investigates the convergence inherent in this iterative sequence.

\begin{thm}\label{T:Iteration}
Given any initial $\mathbf{\Omega}_0 \in \Delta^n$, let $\mathbf{\Omega}_0, \mathbf{\Omega}_1, \ldots$ be the sequence obtained from \eqref{E:Arg_z} and \eqref{E:Arg_Omega}. The following inequality holds for the sequence of $\ell_p$-norms
\begin{equation}
  \lVert \mathbf{A} e^{j \mathbf{\Omega}_0 } \rVert_p \le \lVert \mathbf{A} e^{j \mathbf{\Omega}_1 } \rVert_p \le \cdots \le \lVert \mathbf{A} e^{j \mathbf{\Omega}_k } \rVert_p \le \cdots .
\end{equation}
\end{thm}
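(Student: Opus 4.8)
The plan is to establish the chain of inequalities by a standard ``sandwich'' argument typical of alternating maximization schemes, where the dual-norm identity plays both as an exact evaluation at the maximizing dual variable and as an upper bound at any feasible dual variable. The crucial structural fact is that each of the two sub-steps either attains the norm exactly or can only increase the inner product, so that the norm cannot decrease from one iterate to the next. It therefore suffices to prove the single-step inequality $\lVert \mathbf{A} e^{j \mathbf{\Omega}_k} \rVert_p \le \lVert \mathbf{A} e^{j \mathbf{\Omega}_{k+1}} \rVert_p$ for an arbitrary $k$, and the full monotone chain follows by induction.

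First I would use the defining property of $\mathbf{z}_k$ in \eqref{E:Arg_z}. Since $\mathbf{z}_k$ attains the supremum in the dual-norm characterization \eqref{E:DualNorm}, we have the exact identity
\begin{equation*}
  \lVert \mathbf{A} e^{j \mathbf{\Omega}_k} \rVert_p = \lvert \langle \mathbf{z}_k, \mathbf{A} e^{j \mathbf{\Omega}_k} \rangle \rvert = \lvert \langle \mathbf{A}^H \mathbf{z}_k, e^{j \mathbf{\Omega}_k} \rangle \rvert,
\end{equation*}
where the second equality is the adjoint relation $\langle \mathbf{z}, \mathbf{A}\mathbf{u}\rangle = \langle \mathbf{A}^H \mathbf{z}, \mathbf{u}\rangle$. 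Next I would invoke the optimality of $\mathbf{\Omega}_{k+1}$ in \eqref{E:Arg_Omega}: because $\mathbf{\Omega}_{k+1}$ maximizes $\lvert \langle \mathbf{A}^H \mathbf{z}_k, e^{j \mathbf{\Omega}} \rangle\rvert$ over $\Delta^n$ and the previous iterate $\mathbf{\Omega}_k$ is itself a feasible point of $\Delta^n$, we obtain
\begin{equation*}
  \lvert \langle \mathbf{A}^H \mathbf{z}_k, e^{j \mathbf{\Omega}_k} \rangle \rvert \le \lvert \langle \mathbf{A}^H \mathbf{z}_k, e^{j \mathbf{\Omega}_{k+1}} \rangle \rvert = \lvert \langle \mathbf{z}_k, \mathbf{A} e^{j \mathbf{\Omega}_{k+1}} \rangle \rvert.
\end{equation*}
Finally, since $\lVert \mathbf{z}_k \rVert_q = 1$, the variable $\mathbf{z}_k$ is merely one feasible point of the supremum defining $\lVert \mathbf{A} e^{j \mathbf{\Omega}_{k+1}} \rVert_p$, so $\lvert \langle \mathbf{z}_k, \mathbf{A} e^{j \mathbf{\Omega}_{k+1}} \rangle\rvert \le \lVert \mathbf{A} e^{j \mathbf{\Omega}_{k+1}} \rVert_p$. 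Concatenating the three displays yields $\lVert \mathbf{A} e^{j \mathbf{\Omega}_k} \rVert_p \le \lVert \mathbf{A} e^{j \mathbf{\Omega}_{k+1}} \rVert_p$.

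I do not expect a serious obstacle, as the mechanism is purely the interplay between exact maximization and feasible sub-optimality. The only points requiring care are largely bookkeeping: ensuring the supremum in \eqref{E:Arg_z} is genuinely attained (which holds because $\{\mathbf{z} : \lVert \mathbf{z}\rVert_q = 1\}$ is compact in finite dimensions and the objective is continuous, so $\arg\sup$ is well defined), and correctly applying the adjoint identity so that the $\mathbf{z}$-step and the $\mathbf{\Omega}$-step are expressed against the same bilinear pairing. One should also note that the argument is agnostic to whether the constraint set is the discrete $\Delta^n$ or its continuous relaxation, since only the membership $\mathbf{\Omega}_k \in \Delta^n$ is used; this is precisely what makes the monotonicity hold uniformly for both the discrete and continuous uni-modular settings.
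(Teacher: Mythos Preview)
Your proof is correct and follows essentially the same three-step chain as the paper's own argument: use the dual-norm attainment at $\mathbf{z}_k$ to express $\lVert \mathbf{A}e^{j\mathbf{\Omega}_k}\rVert_p$ as an inner product, invoke optimality of $\mathbf{\Omega}_{k+1}$ over the feasible $\mathbf{\Omega}_k$, and then bound by the dual-norm supremum at the new point. Your added remarks on attainment of the supremum and on the argument's indifference to discrete versus continuous constraints are sound and slightly more explicit than the paper's version.
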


\begin{proof}
  By utilizing the definitions presented in \eqref{E:DualNorm} and \eqref{E:Arg_z}, we can express $\lVert \mathbf{A} e^{j \mathbf{\Omega}_k } \rVert_p = \lvert \langle \mathbf{z}_k , \mathbf{A} e^{ j \mathbf{\Omega}_k } \rangle \rvert = \lvert \langle \mathbf{A}^H \mathbf{z}_k , e^{ j \mathbf{\Omega}_k } \rangle \rvert$. Further leveraging \eqref{E:Arg_Omega}, we establish $ \lvert \langle \mathbf{A}^H \mathbf{z}_k , e^{ j \mathbf{\Omega}_k } \rangle \rvert \le \lvert \langle \mathbf{A}^H \mathbf{z}_k , e^{ j \mathbf{\Omega}_{k+1} } \rangle \rvert = \lvert \langle \mathbf{z}_k , \mathbf{A} e^{ j \mathbf{\Omega}_{k+1} } \rangle \rvert$. Furthermore, from \eqref{E:Arg_z}, we derive that $\lvert \langle \mathbf{z}_k , \mathbf{A} e^{ j \mathbf{\Omega}_{k+1} } \rangle \rvert \le \lvert \langle \mathbf{z}_{k+1} , \mathbf{A} e^{ j \mathbf{\Omega}_{k+1} } \rangle \rvert = \lVert \mathbf{A} e^{j \mathbf{\Omega}_{k+1} } \rVert_p$. We finally establish the inequality $\lVert \mathbf{A} e^{j \mathbf{\Omega}_k } \rVert_p \le \lVert \mathbf{A} e^{j \mathbf{\Omega}_{k+1} } \rVert_p$.
\end{proof}

\begin{rem}\label{R:lifting}
Theorem~\ref{T:Iteration} establishes that for any initial point $\mathbf{\Omega}_0 \in \Delta^n$, the iterative procedure generates a sequence $\mathbf{\Omega}_1, \mathbf{\Omega}_2, \ldots$ where the $\ell_p$-norms (cost function) exhibit a monotonically non-decreasing behavior. The role of \eqref{E:Arg_Omega} is to ensure that the sequence adheres to the discrete uni-modular constraints. The proposed iteration is notable for its ability to effectively compensate for the performance degradation caused by hard rounding. This distinctive feature positions the proposed method as the first post-rounding \emph{lifting} approach applicable to any algorithms with continuous solutions. 
\end{rem}

The following sections focus on solving both the continuous and discrete inner product maximization problems presented in \eqref{E:Arg_z} and \eqref{E:Arg_Omega}. Initially, we address the continuous inner product maximization problem.

\subsection{Analytical Solutions for Continuous Inner Product Maximization Problems}
For a given vector $\mathbf{w} \in \mathbb{C}^{n}$, our objective is to solve 
\begin{equation}
  \sup_{ \lVert \mathbf{z} \rVert_q = 1 } \lvert \langle \mathbf{z} , \mathbf{w} \rangle \rvert.  
\end{equation}
Fortunately, analytical solutions exist for $q=2, \infty$ (corresponding to $p=2, 1$, respectively). Specifically, when $q = 2$ ($p=2$), for  $\lVert \mathbf{z} \rVert_2 = 1$, $\lvert \langle \mathbf{z} , \mathbf{w} \rangle \rvert \le \lVert \mathbf{z} \rVert_2 \lVert \mathbf{w} \rVert_2 = \lVert \mathbf{w} \rVert_2$. By setting $\mathbf{z} = e^{j \phi} \mathbf{w} / \lVert \mathbf{w} \rVert_2$, we achieve  $\lvert \langle \mathbf{z} , \mathbf{w} \rangle \rvert = \lVert \mathbf{w} \rVert_2$. Here $\phi$ is an arbitrary angle. For simplicity, in the case of $q=2$, we choose $\mathbf{z}^k = \mathbf{A} e^{ j \mathbf{\Omega}_k } / \lVert \mathbf{A} e^{ j \mathbf{\Omega}_k } \rVert_2$ in \eqref{E:Arg_z}.

We now consider the scenario where $q = \infty$ ($p=1$). H{\"o}lder's inequality asserts that, for $\lVert \mathbf{z} \rVert_\infty = 1$, $\lvert \langle \mathbf{z} , \mathbf{w} \rangle \rvert \le \lVert \mathbf{z} \rVert_\infty \lVert \mathbf{w} \rVert_1 = \lVert \mathbf{w} \rVert_1$. By setting $\mathbf{z} = e^{j \phi} e^{j \angle \mathbf{w}}$, where $\phi$ is also an arbitrary angle, we achieve  $\lvert \langle \mathbf{z} , \mathbf{w} \rangle \rvert = \lVert \mathbf{w} \rVert_1$. Similarly, for the case $q = \infty$, we choose $\mathbf{z}^k = e^{ j \angle \mathbf{A} e^{ j \mathbf{\Omega}_k } }$ in \eqref{E:Arg_z}. Thus, we successfully obtain the solution for the scenarios where $q=2, \infty$ (corresponding to $p=2, 1$).

\begin{rem}
When $q=1$ ($p=\infty$), an analytical solution also exists for the problem $\sup_{ \lVert \mathbf{z} \rVert_1 = 1 } \lvert \langle \mathbf{z} , \mathbf{w} \rangle \rvert$. However, we do not employ the iteration presented in \eqref{E:Arg_z} and \eqref{E:Arg_Omega} to address $(Q_\infty)$. Instead, we develop an efficient divide-and-sort (DaS) method to solve the discrete inner product maximization problem \eqref{E:Arg_Omega}. In the case of $(Q_\infty)$, the DaS enables the identification of the global optimum with polynomial search complexity. 
\end{rem}

Before exploring the DaS method, we consider the selection of initial starting point in (15). We replace \eqref{E:Arg_Omega} with its continuous counterpart, leading to the following iteration
\begin{subnumcases}
  \mathbf{z}_k = \arg \sup_{ \lVert \mathbf{z} \rVert_q = 1 } \lvert \langle \mathbf{z} , \mathbf{A} e^{ j \mathbf{\Omega}_k } \rangle \rvert , \label{E:Arg_z_2} \\
  \mathbf{\Omega}_{k+1} = \arg \sup_{ \mathbf{\Omega} } \lvert \langle \mathbf{A}^H \mathbf{z}_{k} , e^{ j \mathbf{\Omega} } \rangle \rvert . \label{E:Arg_Omega_2}
\end{subnumcases}
Analogous to Theorem~\ref{T:Iteration}, this iterative process generates a sequence $\mathbf{\Omega}_1, \mathbf{\Omega}_2, \ldots$ where the $\ell_p$-norm sequence $\lVert \mathbf{A} e^{j \mathbf{\Omega}_1 } \rVert_p, \lVert \mathbf{A} e^{j \mathbf{\Omega}_2 } \rVert_p, \ldots$ increases monotonically. Indeed, the convergence of such iterations gives rise to a local maxima for the problem
\begin{equation}
\max_{ \mathbf{\Omega} } \lVert \mathbf{A} e^{j \mathbf{\Omega} } \rVert_p.
\end{equation}
In practice, this local maxima serves as an excellent starting point for the iteration presented in \eqref{E:Arg_z} and \eqref{E:Arg_Omega}.

Solving \eqref{E:Arg_Omega_2} is straightforward. Following the analysis for the case $q = \infty$ ($p=1$), H{\"o}lder's inequality asserts that $\lvert \langle \mathbf{A}^H \mathbf{z}_{k} , e^{ j \mathbf{\Omega} } \rangle \rvert \le \lVert \mathbf{A}^H \mathbf{z}_{k} \rVert_1 \lVert e^{ j \mathbf{\Omega} } \rVert_\infty = \lVert \mathbf{A}^H \mathbf{z}_{k} \rVert_1$. By setting $\mathbf{\Omega} = e^{j \phi} \angle \mathbf{A}^H \mathbf{z}_{k}$, we achieve $\lvert \langle \mathbf{A}^H \mathbf{z}_{k} , e^{ j \mathbf{\Omega} } \rangle \rvert = \lVert \mathbf{A}^H \mathbf{z}_{k} \rVert_1$. Here $\phi$ represents an arbitrary angle. In practical, we can choose $\mathbf{\Omega}_{k+1} = \angle \mathbf{A}^H \mathbf{z}_{k}$ for simplicity.

\section{A General Divide-and-sort Search Framework for Discrete Inner Product Maximization}\label{Section4}

In this section, we focus on the discrete uni-modular constrained inner product maximization problem \eqref{E:Arg_Omega}, which is a specific instance of a broader problem as
\begin{equation}\label{P1}
\max_{ \mathbf{\Omega} \in \Delta^n } \lvert \langle \mathbf{v} , e^{ j \mathbf{\Omega} } \rangle \rvert .
\end{equation}
Here $\mathbf{v} \in \mathbb{C}^n$ represents a given vector.

By introducing an auxiliary variable $\psi$ representing the argument, we can express the absolute value as 
\begin{equation}
   \lvert \langle \mathbf{v} , e^{ j \mathbf{\Omega} } \rangle \rvert = \max_{\psi \in [0, 2 \pi)} \Re \left \{ e^{- j \psi} \mathbf{v}^H e^{j \mathbf{\Omega} } \right \} .
\end{equation}
With the polar form $v_i = \lvert v_i \rvert e^{j\tau_i}$, we have
\begin{equation}\label{E:max_max_2}
  \begin{aligned}
   & \max_{ \mathbf{\Omega} \in \Delta^n } \lvert \langle \mathbf{v} , e^{ j \mathbf{\Omega} } \rangle \rvert  \\
 = & \max_{ \mathbf{\Omega} \in \Delta^n } \max_{\psi \in [0, 2 \pi)}  \Re \left \{ \sum_{i=1}^{n} \lvert v_i \rvert e^{j ( - \psi + \tau_i + \Omega_i ) } \right \} \\
 = & \max_{\psi \in [0, 2 \pi)} \max_{ \mathbf{\Omega} \in \Delta^n }   \Re \left \{ \sum_{i=1}^{n} \lvert v_i \rvert e^{j ( - \psi + \tau_i + \Omega_i ) } \right \} .
  \end{aligned}
\end{equation}
The order of the variables being maximized does not impact the results. Thus, we successfully identify an equivalent form to \eqref{P1}
\begin{equation}\label{E:EquivalentForm}
  \max_{\psi \in [0, 2 \pi)} \sum_{i=1}^{n} \left( \lvert v_i \rvert \max_{ \Omega_i \in \Delta } \cos \left( \psi - (\tau_i + \Omega_i) \right) \right) .
\end{equation}
The advantage of \eqref{E:EquivalentForm} is that it yields a search set, parameterized by $\psi$, that contains the optimal solution. Importantly, this search set has a significantly smaller cardinality compared to the exhaustive search space.

\subsection{Parameterized Search Set}
To construct a parameterized search set with a smaller cardinality, we first consider a sub-problem defined as
\begin{align}
  \Omega_{i, \text{opt}} (\psi) \equiv \arg \max_{ \Omega_i \in \Delta } \cos \left( \psi - (\tau_i + \Omega_i) \right) .
  \label{E:Subproblem}
\end{align}
For a given $\psi$, the value of $\cos \left( \psi - (\tau_i + \Omega_i) \right)$ increases as the angular difference $| \psi - (\tau_i + \Omega_i) |$ decreases. The primary task is to select $\Omega_i$ that minimizes $| \psi - (\tau_i + \Omega_i) |$. 

\begin{figure}[htbp]
  \centering
  \subfigure[]{
  \label{Fig:Solutions:SingleElement}
  \includegraphics[width=.7\columnwidth]{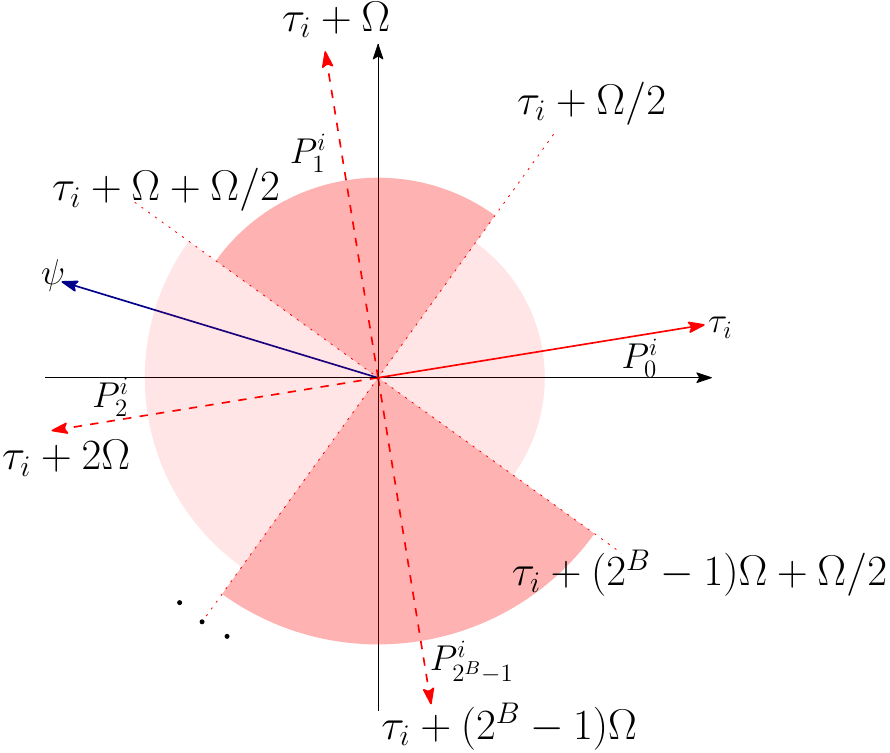}}
  \subfigure[]{
  \label{Fig:Solutions:Piecewise}
  \includegraphics[width=.8\columnwidth]{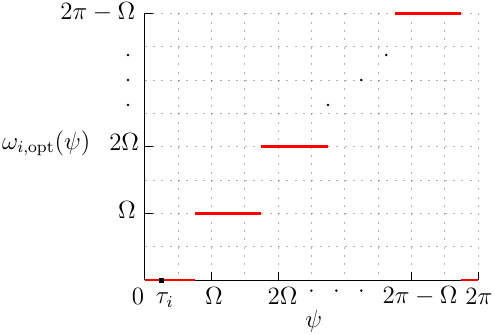}}
  \caption{The approach to find a suitable solution to \eqref{E:Subproblem}. (a) When $\psi$ is given and held fixed, the step to determine the best phase configuration $\Omega_{i, \text{opt}} (\psi)$ is to check which region $\psi$ belongs to. (b) The piecewise constant nature of the function  $\Omega_{i, \text{opt}} (\psi)$.}
  \label{Fig:Solutions}
\end{figure}

In fact, $\tau_i + \Omega_i$ has $2^B$ potential values. We divide the interval $[0, 2 \pi)$ into $2^B$ equal-length regions $\mathbb{P}^i_0, \mathbb{P}^i_1, \dots, \mathbb{P}^i_{2^B-1}$. Each region $\mathbb{P}^i_{n}= \left[ ( \tau_i + n \delta ) - \delta / 2, ( \tau_i + n \delta ) + \delta / 2 \right)$ is centered around $\tau_i + n\delta$, as illustrated in Fig.~\ref{Fig:Solutions:SingleElement}. When $\psi$ is given, determining the optimal $\Omega_{i, \text{opt}} (\psi)$ is straightforward: it corresponds to the region to which $\psi$ belongs. Specifically, if $ ( \tau_i + n \delta ) - \delta / 2 \le \psi < ( \tau_i + n \delta ) + \delta / 2 $, then the optimal configuration $\Omega_{i, \text{opt}} (\psi) = n \delta$. Fig.~\ref{Fig:Solutions:SingleElement} provides a visual representation on the ingredients to find such $\Omega_{i, \text{opt}} (\psi)$. By letting $\psi$ vary within the interval $[0, 2\pi)$, we can generate a plot of the function $\Omega_{i, \text{opt}} (\psi)$, as demonstrated in Fig.~\ref{Fig:Solutions:Piecewise}. One notable characteristic of this function is its piecewise constant nature, where $\Omega_{i, \text{opt}} (\psi)$ remains constant within each region $\mathbb{P}^i_{n}$.

In \eqref{E:EquivalentForm}, there are $n$ inner maximization sub-problems, each corresponding to an instance of \eqref{E:Subproblem}. The solutions for all sub-problems can be represented as a vector $\mathbf{\Omega}_{\text{opt}} (\psi)$. We observe that there exist $n$ groups of partitions, with each group consisting of $2^B$ regions. By rearranging the boundaries of these regions, the angle $[0, 2\pi)$ can be divided into $n 2^B$ non-overlapping regions, denoted as $\mathbb{P}^1, \ldots, \mathbb{P}^{n 2^B}$. We thus build a new search set, as the problem \eqref{E:EquivalentForm} can be rewritten as 
\begin{equation}
  \max_{\psi \in \mathbb{P}^1 \cup \cdots \cup \mathbb{P}^{n 2^B } } \sum_{i=1}^{n} \left( \lvert v_i \rvert \max_{ \Omega_i \in \Delta } \cos \left( \psi - (\tau_i + \Omega_i) \right) \right) .
\end{equation}
The advantage of such rearrangement lies in the fact that, for $\psi \in \mathbb{P}^l$, the solution vector $\mathbf{\Omega}_{\text{opt}} (\psi)$ to the inner maximization problems remains constant.

We provide an illustration for the rearrangement in Fig.~\ref{Fig:MultipleElements}. Two groups of partitions are represented by red (on the left) and blue arcs (in the middle), respectively. After rearranging, they combine to form $2 \times 2^B$ non-overlapping regions (denoted by $[\mathbb{P}_{0}^1,\mathbb{P}_{0}^2],[\mathbb{P}_{1}^1,\mathbb{P}_{0}^2]\cdots, [\mathbb{P}_{2^B-1}^1,\mathbb{P}_{2^B-1}^2] $), as shown on the right in Fig.~\ref{Fig:MultipleElements}. Within each region, the value of $\mathbf{\Omega}_{\text{opt}} (\psi)$ remains constant.

\begin{figure*}
  \centering
  \includegraphics[width=0.9\linewidth]{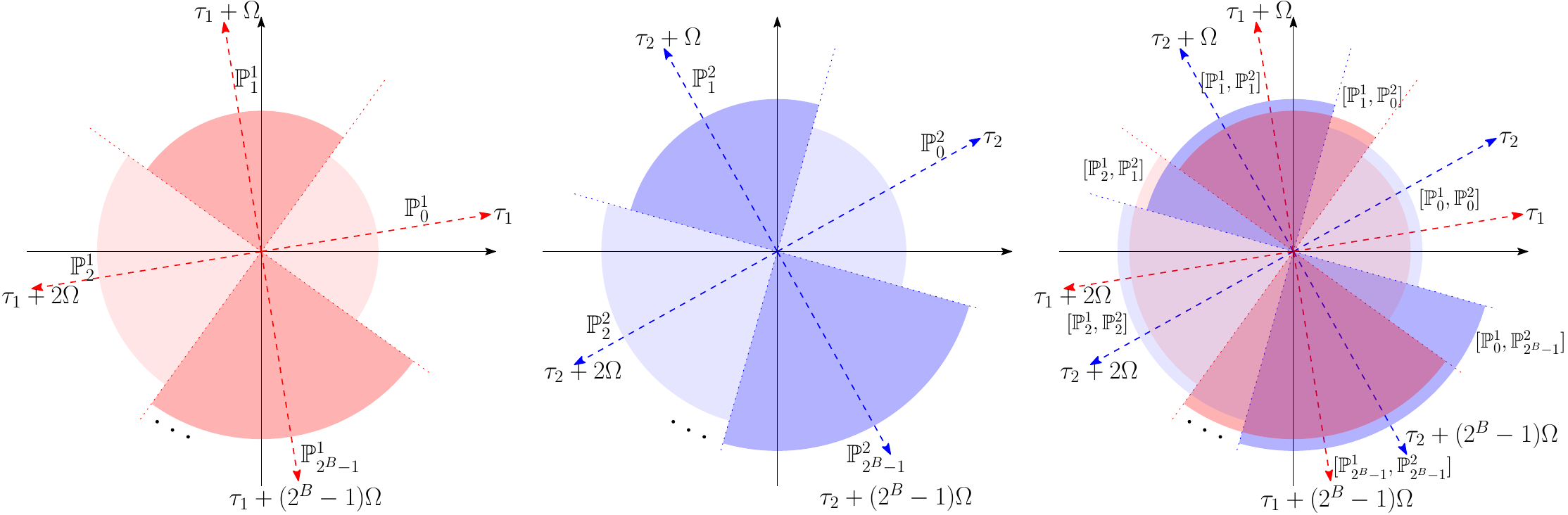}
  \caption{Illustration of region rearrangement. There are two groups of partitions ($n=2$), represented by red (on the left) and blue arcs (in the middle), each containing $2^B$ regions. After the rearrangement, these two groups combine to form a total of $2 \times 2^B$ non-overlapping regions (on the right).}
  \label{Fig:MultipleElements}
\end{figure*}

\subsection{Efficient Encoding for $n 2^B$ Regions}

We now explore encoding schemes for all regions $\mathbb{P}^1, \ldots, \mathbb{P}^{n 2^B}$. Since $\tau_1, \ldots, \tau_N$ are known \emph{a priori}, we can represent the centers of all regions as
\begin{equation}\label{E:CandidateSet}
\begin{array}{cl}
\mathcal{C}^1 & = \left\{ \tau_1, \tau_1 + \delta, \cdots, \tau_1 + (2^B-1) \delta \right\} \mod 2 \pi, \\
& \vdots \\
\mathcal{C}^n & = \left\{ \tau_n, \tau_n + \delta, \cdots, \tau_n + (2^B-1) \delta \right\} \mod 2 \pi.
\end{array}
\end{equation}
Without loss of generality, we assume that, for each $i$, the smallest center in $\mathcal{C}^i$ is $\tau_i$. We establish a bijective mapping $\mathcal{M}$ from $\{1, \ldots, n\}$ to $\{1, \ldots, n\}$ such that 
\begin{equation}\label{E:BijectiveMapping}
\tau_{\mathcal{M}(1)} \le \tau_{\mathcal{M}(2)} \le \cdots \le \tau_{\mathcal{M}(n)}.
\end{equation}
Each region can be encoded as
\begin{equation}\label{E:EncodingPartition}
\begin{aligned}
\begin{array}{cl}
  \mathbb{P}^1 & := [\tau_{\mathcal{M}(1)}, \tau_{\mathcal{M}(2)}, \cdots, \tau_{\mathcal{M}(n)}]^T + \delta [1, 0, \cdots, 0]^T, \\
  & \vdots \\
  \mathbb{P}^{n} & := [\tau_{\mathcal{M}(1)}, \tau_{\mathcal{M}(2)}, \cdots, \tau_{\mathcal{M}(n)}]^T + \delta [1, 1, \cdots, 1]^T, \\
  \mathbb{P}^{n+1} & := [\tau_{\mathcal{M}(1)}, \tau_{\mathcal{M}(2)}, \cdots, \tau_{\mathcal{M}(n)}]^T + \delta [2, 1, \cdots, 1]^T, \\
  & \vdots \\
  \mathbb{P}^{2 n} & := [\tau_{\mathcal{M}(1)}, \tau_{\mathcal{M}(2)}, \cdots, \tau_{\mathcal{M}(n)}]^T + \delta [2, 2, \cdots, 2]^T, \\
  \mathbb{P}^{2 n +1 } & := [\tau_{\mathcal{M}(1)}, \tau_{\mathcal{M}(2)}, \cdots, \tau_{\mathcal{M}(n)}]^T + \delta [3, 2, \cdots, 2]^T , \\
  & \vdots \\
  \mathbb{P}^{n 2^B } & := [\tau_{\mathcal{M}(1)}, \tau_{\mathcal{M}(2)}, \cdots, \tau_{\mathcal{M}(n)}]^T + \delta [0, 0, \cdots, 0]^T .
  \end{array}
\end{aligned}
\end{equation}

As previously mentioned, for $\psi \in \mathbb{P}^l,l = 1,\cdots,n2^B$, the vector $\mathbf{\Omega}_{\text{opt}} (\psi)$ remains constant. The $\mathcal{M}(i)$-th element of $\mathbf{\Omega}_{\text{opt}} (\psi)$ is given as
\begin{equation}
  \Omega^l_{\mathcal{M}(i),\text{opt}} = \mathbb{P}^l(i) - \tau_{\mathcal{M}(i)}.
\end{equation}
Finally, we construct a search set $\mathbb{U}$ consisting of $n 2^B$ candidates
\begin{equation}
  \mathbb{U} = \left\{ \{ \Omega^1_{1,\text{opt}}, \ldots, \Omega^1_{n,\text{opt}} \}, \cdots, \{ \Omega^{n 2^B}_{1,\text{opt}}, \ldots, \Omega^{n 2^B}_{N,\text{opt}} \} \right\} .
\end{equation}

\begin{prop}
There exists a candidate set $\mathbb{U}$ with a cardinality of $n 2^B$ such that the discrete uni-modular constrained inner product maximization problem $\max_{ \mathbf{\Omega} \in \Delta^n } \lvert \langle \mathbf{v} , e^{ j \mathbf{\Omega} } \rangle \rvert $ is equivalent to 
\begin{equation}
  \max_{ \mathbf{\Omega} \in \mathbb{U} } \lvert \langle \mathbf{v} , e^{ j \mathbf{\Omega} } \rangle \rvert .
\end{equation}
\end{prop}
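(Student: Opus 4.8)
The plan is to prove the two maximizations coincide through one trivial inequality and one dominance inequality, built on the equivalent form \eqref{E:EquivalentForm} and the piecewise-constant structure of the decoupled inner optimizers. Since $\mathbb{U} \subseteq \Delta^n$ by construction, the inequality $\max_{\mathbf{\Omega} \in \mathbb{U}} \lvert \langle \mathbf{v}, e^{j\mathbf{\Omega}} \rangle \rvert \le \max_{\mathbf{\Omega} \in \Delta^n} \lvert \langle \mathbf{v}, e^{j\mathbf{\Omega}} \rangle \rvert$ is immediate, so the entire content of the proposition is the reverse inequality. First I would take \eqref{E:EquivalentForm} as the starting point and write the objective as $\max_{\psi \in [0,2\pi)} g(\psi)$ with $g(\psi) = \sum_{i=1}^{n} \lvert v_i \rvert \cos(\psi - (\tau_i + \Omega_{i,\text{opt}}(\psi)))$, where each $\Omega_{i,\text{opt}}(\psi)$ solves the decoupled scalar sub-problem \eqref{E:Subproblem}.

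The key structural fact I would record is that the vector-valued map $\psi \mapsto \mathbf{\Omega}_{\text{opt}}(\psi) = (\Omega_{1,\text{opt}}(\psi), \dots, \Omega_{n,\text{opt}}(\psi))$ is piecewise constant with at most $n 2^B$ pieces. For each fixed $i$, the scalar map $\Omega_{i,\text{opt}}(\psi)$ is constant on each of the $2^B$ arcs $\mathbb{P}^i_{0}, \dots, \mathbb{P}^i_{2^B-1}$, as argued around \eqref{E:Subproblem} and illustrated in Fig.~\ref{Fig:Solutions:Piecewise}; hence its breakpoints are exactly the $2^B$ arc endpoints of the $i$-th partition. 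Overlaying all $n$ partitions, the union of breakpoints contains at most $n 2^B$ points, which cut $[0,2\pi)$ into at most $n 2^B$ arcs $\mathbb{P}^1, \dots, \mathbb{P}^{n 2^B}$. On any single such arc every $\Omega_{i,\text{opt}}(\psi)$ is simultaneously constant, so $\mathbf{\Omega}_{\text{opt}}(\psi)$ equals one fixed vector $\mathbf{\Omega}^l$, and collecting these vectors as $l$ ranges over the arcs reproduces precisely the set $\mathbb{U}$ encoded in \eqref{E:EncodingPartition}.

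To close the reverse inequality I would argue by dominance. Because $g$ restricted to $\mathbb{P}^l$ coincides with the single-configuration objective $\psi \mapsto \sum_i \lvert v_i \rvert \cos(\psi - (\tau_i + \Omega^l_i))$, and enlarging the feasible set of $\psi$ from $\mathbb{P}^l$ to all of $[0,2\pi)$ can only raise the maximum, I obtain $\max_{\psi \in \mathbb{P}^l} g(\psi) \le \max_{\psi \in [0,2\pi)} \sum_i \lvert v_i \rvert \cos(\psi - (\tau_i + \Omega^l_i)) = \lvert \langle \mathbf{v}, e^{j\mathbf{\Omega}^l} \rangle \rvert$, the last equality being the identity $\max_\psi \Re\{ e^{-j\psi} \langle \mathbf{v}, e^{j\mathbf{\Omega}^l} \rangle \} = \lvert \langle \mathbf{v}, e^{j\mathbf{\Omega}^l} \rangle \rvert$ already used at the start of this section. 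Maximizing over the finitely many arcs then gives $\max_{\mathbf{\Omega} \in \Delta^n} \lvert \langle \mathbf{v}, e^{j\mathbf{\Omega}} \rangle \rvert = \max_l \max_{\psi \in \mathbb{P}^l} g(\psi) \le \max_l \lvert \langle \mathbf{v}, e^{j\mathbf{\Omega}^l} \rangle \rvert = \max_{\mathbf{\Omega} \in \mathbb{U}} \lvert \langle \mathbf{v}, e^{j\mathbf{\Omega}} \rangle \rvert$, which combined with the trivial direction yields the claimed equality.

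The dominance step itself is short once the structure is in place; the real bookkeeping burden, and where I would be most careful, is the counting and encoding: justifying that overlaying $n$ equispaced partitions produces no more than $n 2^B$ regions and that the explicit increments in \eqref{E:EncodingPartition} enumerate exactly the constant values $\mathbf{\Omega}^l$. I would also flag the degenerate case in which breakpoints from different partitions coincide, which only \emph{decreases} the number of genuinely distinct arcs; this never breaks the argument, since $\mathbb{U}$ still has at most $n 2^B$ elements and the dominance inequality is unaffected. Accordingly, the stated cardinality $n 2^B$ is best read as an exact count of enumerated candidates (allowing repetitions) rather than a count of distinct regions.
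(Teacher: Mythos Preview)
Your argument is correct and follows precisely the approach the paper develops in the text preceding the proposition: the equivalent form \eqref{E:EquivalentForm}, the piecewise-constant structure of $\mathbf{\Omega}_{\text{opt}}(\psi)$, and the rearranged partition $\mathbb{P}^1,\dots,\mathbb{P}^{n2^B}$. The paper does not provide a separate formal proof of the proposition---it is stated as a summary of the preceding construction---so your dominance step and your explicit handling of the two inequalities supply rigor that the paper leaves implicit; your remark on degenerate coinciding breakpoints is also a useful clarification the paper omits.
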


\begin{rem}
  The primary advantage of this encoding method is the significant reduction in the search space. The cardinality of $\mathbb{U}$ decreases from an exponential size to a polynomial size. The number of required searches decreases from $2^{n B}$ to $n 2^B$.
\end{rem}

The proposed DaS-based alternating inner product maximization for solving $(Q_1)$ and $(Q_2)$ is illustrated in Algorithm~\ref{alg1}.
%

\begin{algorithm}[htbp]
\caption{DaS-based alternating inner product maximization} 
\label{alg1} 
\begin{algorithmic}[1] 
\REQUIRE Complex matrix $\mathbf{A}$, the number of discrete levels $2^B$. 
\ENSURE Optimal discrete phase configurations $\mathbf{\Omega}_{\text{opt}}$. 
\STATE Choose an initial iteration point $\mathbf{\Omega}_0$, and set $k=0$ . 
\REPEAT
\STATE Calculate $\mathbf{z}_k$ according to \eqref{E:Arg_z}. 
\STATE Calculate $\lVert A e^{j \mathbf{\Omega}_k } \rVert_p$.
\STATE Determine $\mathbf{\Omega}_{k+1}$ by DaS method (\textbf{S1}-\textbf{S5}) based on $\mathbf{A}^H$ and $\mathbf{z}_k$ as in~\eqref{E:Arg_Omega}.
	\STATE \ \  \textbf{S1}: Set $\mathbf{v}=\mathbf{A}^H\mathbf{z}_k$, and calculate the magnitude $| v_i |$ \\ \ \ and angle $\tau_i$, for each element $v_i$ in $\mathbf{v}$.
    \STATE \ \ \textbf{S2}: Divide the complete angle $[0,2\pi)$ into $2^B$ parts\\ \ \ based on $\tau_i$ and construct the set $\mathcal{C}^i$ as~\eqref{E:CandidateSet} . 
    \STATE \ \ \textbf{S3}: Sort $\{ \tau_1, \ldots, \tau_n \}$ and find the mapping $\mathcal{M}$ as\\ \ \ in~\eqref{E:BijectiveMapping}.
    \STATE \ \ \textbf{S4} :Construct the set $\mathbb{U}$ consisting of $n 2^B$ candidates \\ \ \ according to the parts encoder in~\eqref{E:EncodingPartition}.
    \STATE \ \ \textbf{S5}: Find $\mathbf{\Omega}_{\text{opt}} = \arg \max_{ \mathbf{\Omega} \in \mathbb{U} } \ \lvert \langle \mathbf{v} , e^{ j \mathbf{\Omega} } \rangle \rvert$.
\STATE Calculate $\lVert A e^{j \mathbf{\Omega}_{k+1} } \rVert_p$.
\UNTIL $\left\lvert \lVert A e^{j \mathbf{\Omega}_{k+1} } \rVert_p - \lVert A e^{j \mathbf{\Omega}_k } \rVert_p \right\rvert \leq 10^{-10}$
\RETURN $\mathbf{\Omega}_{\text{opt}} = \mathbf{\Omega}_{k+1}$.
\end{algorithmic} 
\end{algorithm}

\subsection{Maximize the $\ell_\infty$-norm}

With the proposed DaS method, we can solve $(Q_\infty)$ efficiently. Utilizing the definition of the $\ell_\infty$-norm, we rewrite $(Q_\infty)$ as
\begin{equation}
  \max_{ \mathbf{\Omega} \in \Delta^n } \max_{i=1, \ldots, m} \lvert \langle \mathbf{A}_{i,:}^H, e^{j \mathbf{\Omega} } \rangle \rvert 
  = \max_{i=1, \ldots, m} \max_{ \mathbf{\Omega} \in \Delta^n } \lvert \langle \mathbf{A}_{i,:}^H, e^{j \mathbf{\Omega} } \rangle \rvert .
\end{equation}
Here, $\mathbf{A}_{i,:}$ represents the $i$-th row of the matrix $\mathbf{A}$. This equivalent formulation implies that to find the solution to $(Q_\infty)$, we only need to solve $m$ discrete uni-modular constrained inner product maximization problems, each of which can be efficiently solved using the proposed DaS method.

\section{Convergence and Lifting Capabilities}\label{Section6}

To evaluate the effectiveness of the proposed alternating inner product maximization approach for solving $\ell_p$-norm maximization problems, we conduct numerical assessments. It is easily seen that the problem $(Q_\infty)$ involves multiple independent inner product maximization problems, each of which can be effectively addressed by the DaS method without the need of iterations. We then focus on the performance of the alternating maximization framework for $(Q_1)$ and $(Q_2)$. 
  \begin{figure}[htbp]
    \centering
    \subfigure[]{
    \label{Fig:Lifting:1norm}
    \includegraphics[width=.75\columnwidth]{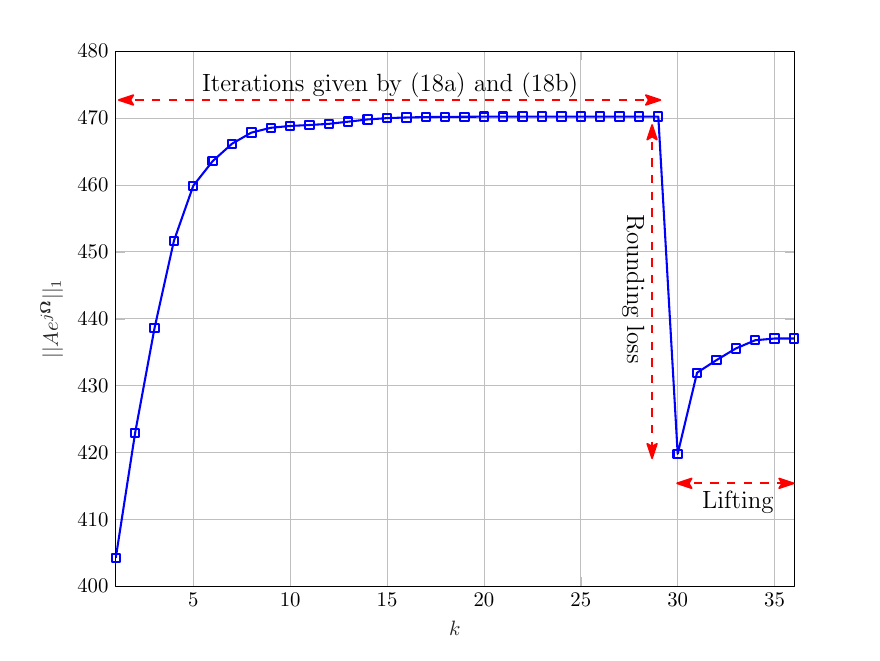}}
    \subfigure[]{
    \label{Fig:Lifting:2norm}
    \includegraphics[width=.75\columnwidth]{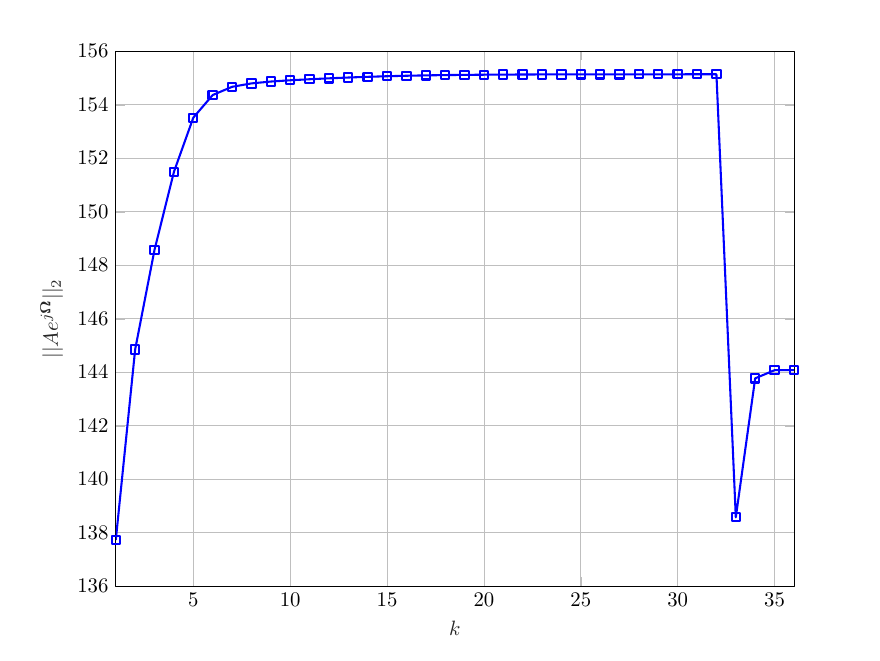}}
    \caption{Convergence behavior and lifting performance. We set $\mathbf{A} \in \mathbb{C}^{10 \times 100}$ and $B = 2$. Both discrete and continuous alternating inner product maximizations converge rapidly, requiring only a small number of iterations. The proposed lifting approach recovers the hard rounding loss. (a) and (b) illustrate the value of cost function for solving $(Q_1)$ and $(Q_2)$, respectively.}
    \label{Fig:Lifting}
  \end{figure}

In Fig.~\ref{Fig:Lifting}, we present the convergence behaviors of the proposed alternating inner product maximization schemes as outlined in (15) and (18). Figs.~\ref{Fig:Lifting:1norm} and \ref{Fig:Lifting:2norm} show that both discrete and continuous alternating inner product maximizations converge rapidly, requiring only a small number of iterations. Moreover, we observe there exists significant rounding loss, particularly with low-bit quantizations. However, after applying post-rounding lifting, there's a notable improvement in the cost-function performance. This unique lifting capability highlights the distinctiveness of the proposed alternating inner product maximization approach. 

  \begin{figure}[htbp]
    \centering
    \subfigure[]{
    \label{Fig:LiftingStat:1norm}
    \includegraphics[width=.75\columnwidth]{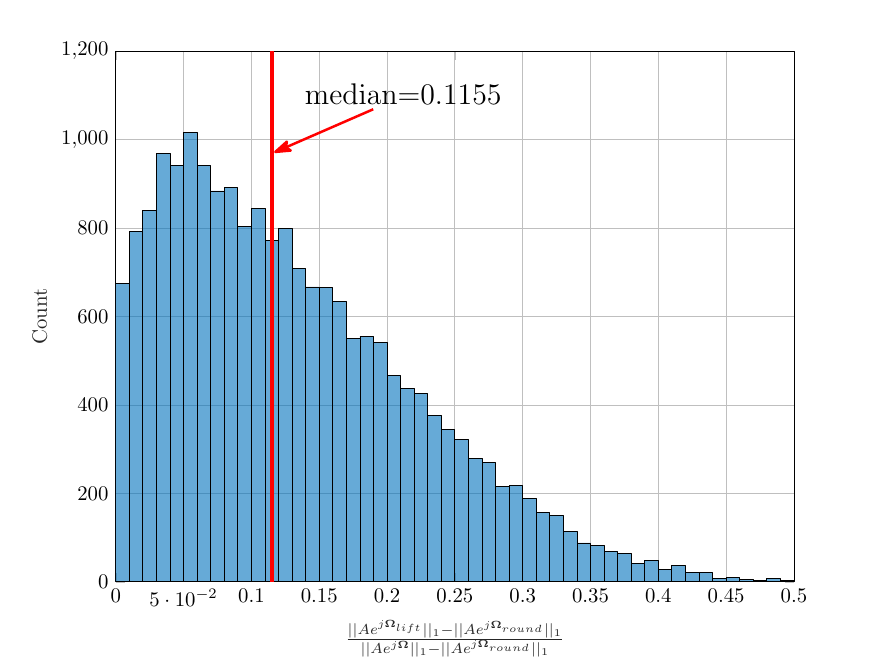}}
    \subfigure[]{
    \label{Fig:LiftingStat:2norm}
    \includegraphics[width=.75\columnwidth]{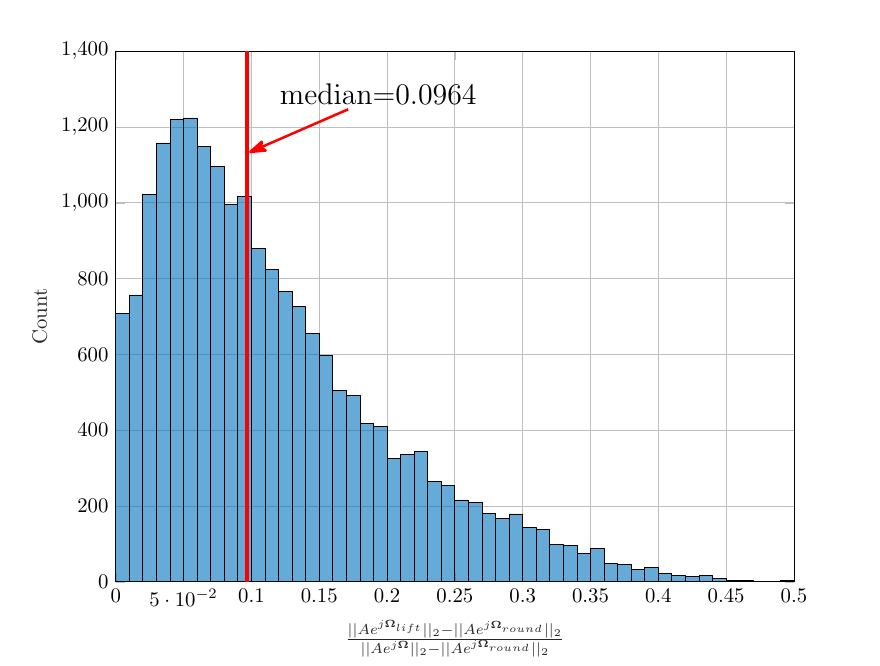}}
    \caption{Lift the solution obtained using Manopt. We set $\mathbf{A} \in \mathbb{C}^{10 \times 100}$ and $B = 1$, conducting 20,000 trials for $p=1, 2$. The distribution of the relative lifting gain is visualized, with the red line marking the median. (a) and (b) illustrate the results for solving $(Q_1)$ and $(Q_2)$, respectively.}
    \label{Fig:LiftingStat}
  \end{figure}

We now present the lifting performance for the discretization of the solution obtained using Manopt, a distinguished state-of-the-art technique.  For better illustration, we first define the relative lifting gain as the ratio between the lifting gain and the rounding loss, expressed as
\begin{equation}
  \frac{ \lVert A e^{j \mathbf{\Omega}_{\text{lifted}} } \rVert_p - \lVert A e^{j \mathbf{\Omega}_{\text{rounded}} } \rVert_p }{ \lVert A e^{j \mathbf{\Omega}_{\text{unrounded}} } \rVert_p - \lVert A e^{j \mathbf{\Omega}_{\text{rounded}} } \rVert_p } .
\end{equation}
We conduct 20,000 trials for $p=1, 2$ and record the relative lifting gain. Fig.~\ref{Fig:LiftingStat} illustrates the distribution of the relative lifting gain, with the red line indicating the median. We observe that there exist instances that the relative lifting gain is greater than 0.4, indicating that for these instances $40\%$ of the hard rounding loss is compensated for. The median value in Fig.~\ref{Fig:LiftingStat:1norm} indicates that, in solving $(Q_1)$, the proposed lifting approach recovers at least $11.55 \%$ of the hard rounding loss for half of the conducted trials. Similarly, in solving $(Q_2)$, at least $9.64 \%$ of the hard rounding loss could be recovered for half of the conducted trials.

\section{Numerical Comparisons of SNR Boosting in RIS beamforming}\label{Section7}

\begin{figure}[tbp]
  \centering
  \subfigure[]{
  \label{ComparisonM32B1}
  \includegraphics[width=0.75\linewidth]{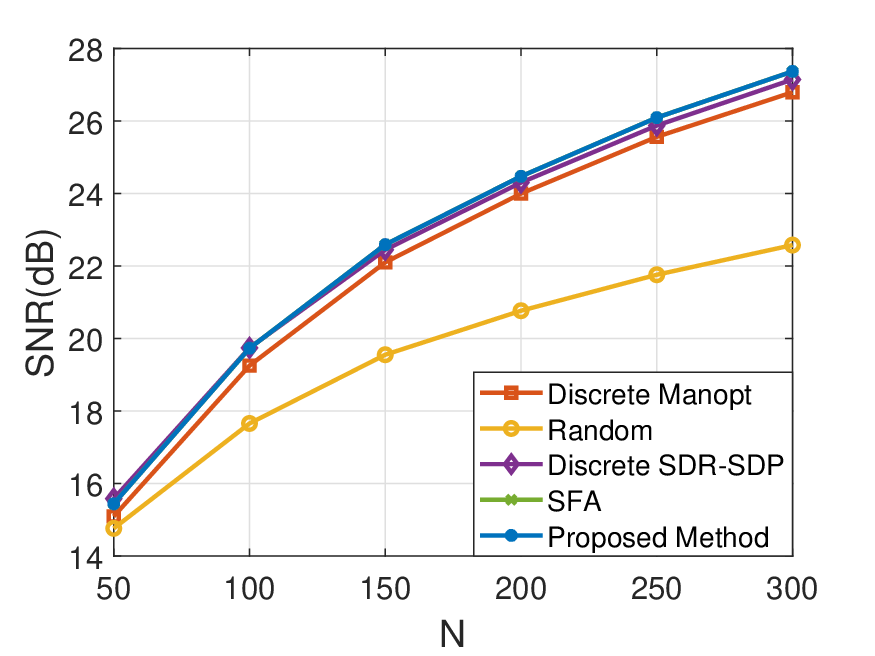}}
  \subfigure[]{
  \label{ComparisonM32B2}
  \includegraphics[width=0.75\linewidth]{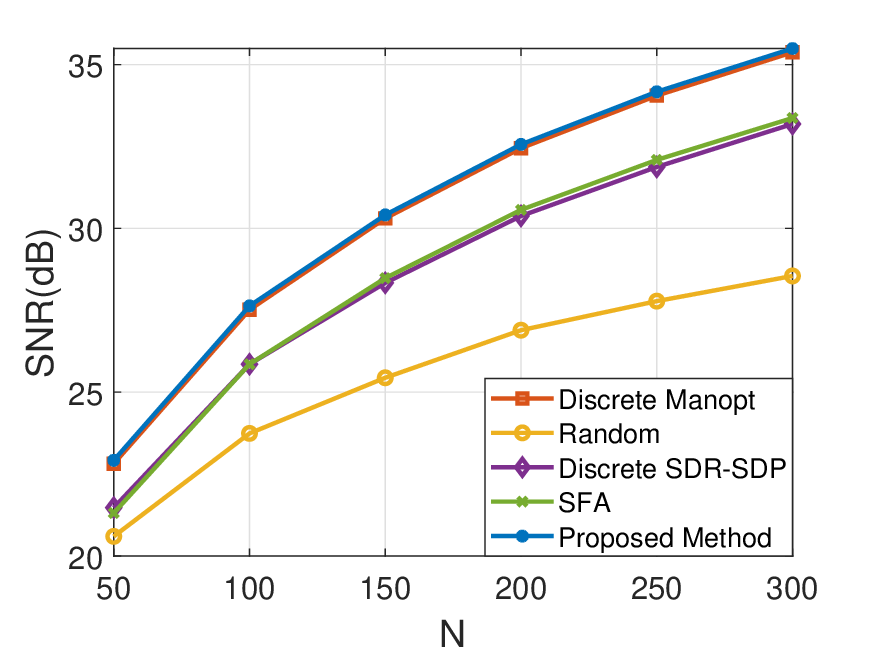}}
  \caption{A comparison of SNR performance of different methods as a function $N$ with the number of BS antennas $M=32$. (a) 1-bit phase quantization scheme. (b) 2-bit phase quantization scheme.}
  \label{ComparisonM32}
\end{figure}

In the context of SNR boosting through RISs, we focus on solving the $\ell_2$-norm maximization problems~\eqref{P4} and~\eqref{Optimization Problem}. We compare the performance of our proposed alternating inner product maximization approach with several state-of-the-art methods, including SDR-SDP~\cite{wu2019intelligent}, Manopt~\cite{yu2019miso}, successive refinement algorithm (SFA)~\cite{wu2019beamforming}, and approximation (APX)~\cite{zhang2022configuring}. SFA and APX are the methods that handle discrete phase configuration directly. APX is dedicated to single-output systems ($M=1$) and is claimed to approximate the global optimum. Additionally, we include a search among $\mathbf{10^5}$ random configurations as a benchmark, denoted as Random. For clarity, we refer to discrete SDR-SDP and discrete Manopt as the hard rounding of their continuous counterparts.

We conduct a comprehensive evaluation of the DaS-based inner product maximization algorithm's performance under various parameter settings. The power gain is recorded for each of the 100 trials with a fixed value of $N$. The channels are assumed to follow independent and identically distributed (i.i.d.) Gaussian distributions with zero mean and variance $\sigma^2$. The noise is Gaussian with variance 1.



Initially, we investigate the discrete beamforming in the NLOS channels, considering an increased number of transmitting antennas $M$. The results depicted in Fig.~\ref{ComparisonM32} illustrate the superior performance of the proposed DaS-based alternating inner product maximization approach compared to other competing methods in terms of SNR. These plots highlight the robustness of our approach in achieving optimal solutions. Notably, for $B=1$ and $B=2$, SFA emerges as the second-best approach. However, as $B$ increases to 2, the performance of SFA deteriorates significantly, dropping more than 2 dB below the DaS-based alternating inner product maximization approach.

In Fig.~\ref{CDFM32B2}, we proceed to investigate the scenario with $M=32$ while keeping $N$ fixed at 200. Across 1000 trials, we observe the power gain and plot the cumulative distribution functions of SNR. The results consistently demonstrate that the DaS-based alternating inner product approach outperforms other methods, achieving an average SNR gain of 2 dB with the quantization level $B=2$ compared to SFA and the hard-rounding of SDR-SDP.
\begin{figure}[tbp]
\centering
\includegraphics[width=0.8\columnwidth]{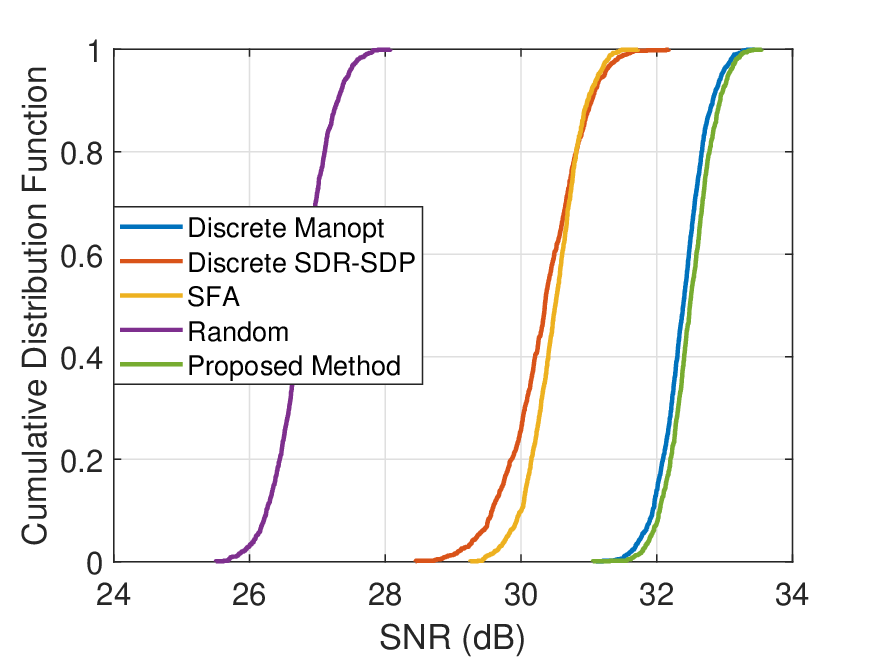}
\caption{Cumulative distribution of SNR versus different methods under the quantization level $B=2$ and number of units $N=200$, and The number of BS antennas $M=32$.}
\label{CDFM32B2}
\end{figure}
\vspace{-0.2cm}

%

\begin{figure}[htbp]
  \centering
  \subfigure[]{
  \label{fig: Compa}
  \includegraphics[width=.75\columnwidth]{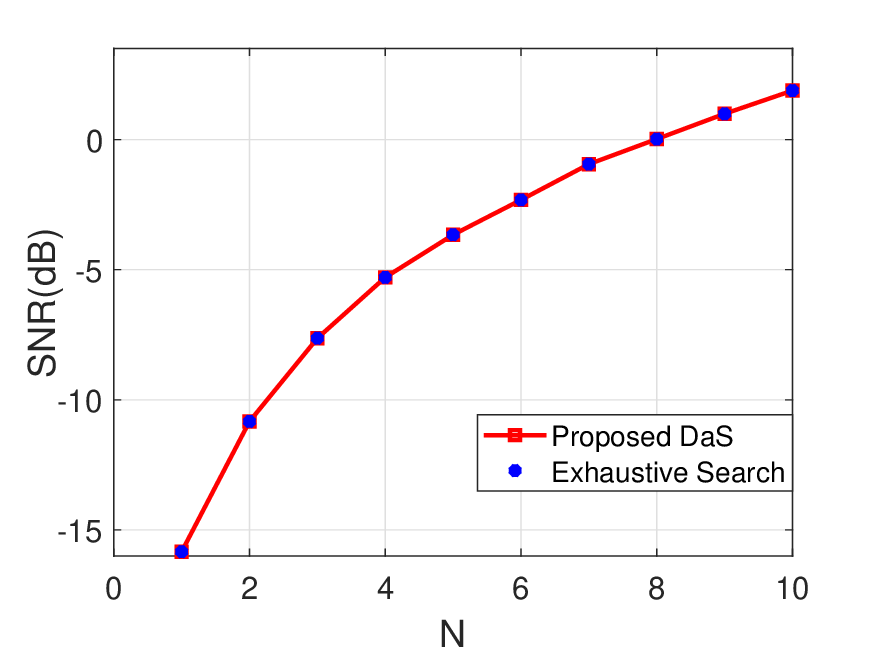}}
  \subfigure[]{
  \label{fig: Compa3Method}
  \includegraphics[width=.75\columnwidth]{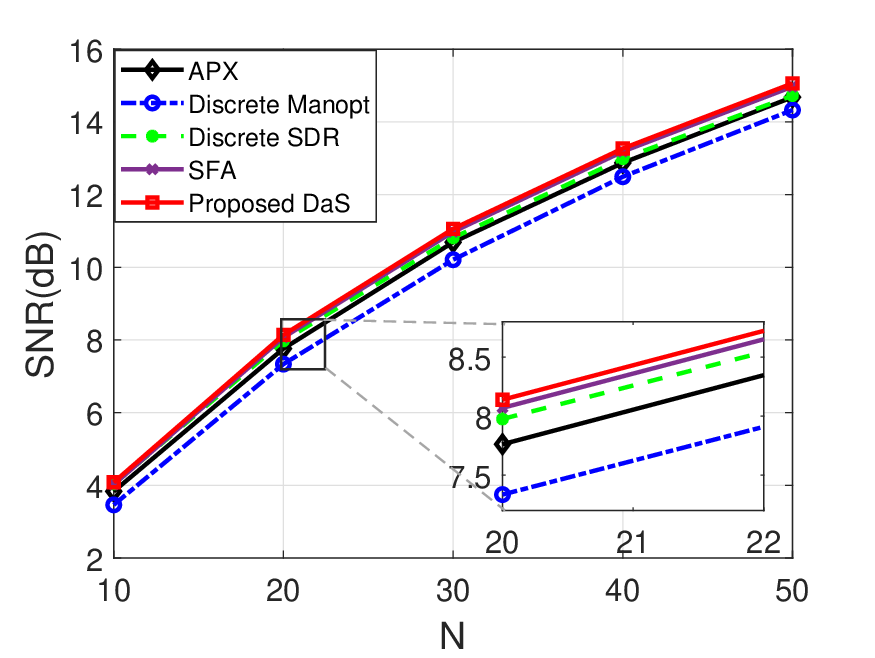}}
  \caption{A comparison of SNR performance as a function $N$ with 1-bit phase quantization scheme. The number of BS antennas $M=1$. (a) The proposed DaS algorithm achieves comparable performance to that of the exhaustive search method. (b) The SNR performance comparison of different methods.}
  \label{Fig1}
\end{figure}

As a special case, when the BS consists of only one antenna, i.e., $M=1$, the optimal beamforming problem is reduced into a discrete inner product maximization problem. This can be efficiently addressed using our proposed DaS search method. The plots in Fig.~\ref{fig: Compa} demonstrate that in such scenarios, the DaS achieves SNR results identical to those obtained through exhaustive search. In Fig.~\ref{fig: Compa3Method}, the comparison results show that when using the 1-bit quantization scheme, the proposed method outperforms other competitors. 



\subsection{4-bit Quantization is Adequate}

In Fig.~\ref{Comparison-Continuous}, we present the results of continuous phase configurations (obtained using Manopt) depicted by red curves, aiming to offer an analysis regarding different quantization schemes. We consider the case regarding the number of BS antennas $M=16$. Notably, the 1-bit discrete configurations exhibit a loss of approximately 3 dB compared to the continuous phase configurations (Manopt) in this case. However, as the quantization resolution increases, the loss in received signal power diminishes. Remarkably, when utilizing 4-bit quantization, the proposed methods achieve SNR gains comparable to continuous configurations, with an average loss of less than  \textbf{0.02} dB. 




\begin{figure}[htbp]
\centering
\includegraphics[width=0.8\columnwidth]{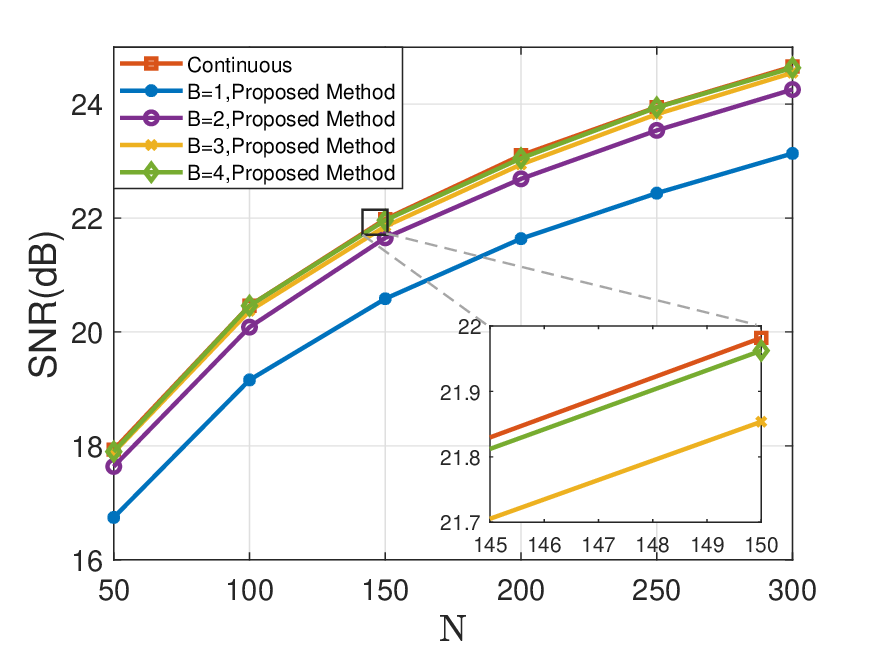}
\caption{The SNR performance distance between different quantization levels and the continuous phase configuration in NLoS scenarios, $N=200$ and $M=16$.}
\label{Comparison-Continuous}
\end{figure}

\subsection{Overall Execution-time Comparison}

To assess the effectiveness of the proposed approach, we conduct tests to evaluate the execution-time as a function of the number of reflecting units in the setting of $M=32$. For each value of $N$, we conduct 100 trials and record the overall execution-time, as presented in Table~\ref{tab: time} and Fig.~\ref{fig:time}. The results demonstrate that the execution-time of discrete SDR-SDP increases substantially as $N$ increases. In fact, for $N = 1000$, it fails to produce a solution even after running for many hours. Conversely, the proposed DaS-based alternating inner product maximization method is much more efficient, when $N = 1000$, it takes only 0.7409 s on average to find the optimal discrete phase configurations. The SFA algorithm is the second fastest. The superior execution-time performance of DaS-A algorithm makes it particularly well-suited for large-scale practical implementations.

\begin{table*}[t]
\caption{Overall execution-time comparison for 100 channel realizations of 1-bit RIS, $M=32$} 
\centering
\setlength{\tabcolsep}{2.4mm}
\begin{tabular}{cccccccc}
\toprule
Methods & $N=10$  &  $N = 50$ & $N = 100$ & $N = 200$ &$N = 500$ &$N = 1000$  \\
\midrule
Discrete SDR-SDP  &  $208.06$ s &$241.01$ s  &$411.57$ s  &$1536.24$ s  & $22767.74$ s & -                         \\
Discrete Manopt  &  $3.27$ s   &$14.02$ s    & $23.76$ s  &$58.57$ s   & $515.25$ s  & $1164.71$ s               \\
SFA       &    $0.04$ s      &$0.15$ s    &$0.44$ s    &$2.15$ s    &$19.03$ s    & $134.85$ s      \\
Proposed Method   &  $0.03$ s   &$0.09$ s    & $0.22$ s   &$0.72$ s    & $11.44$ s   & $74.09$ s      \\
\bottomrule 
\vspace{-0.5cm}
\label{tab: time}
\end{tabular}
\end{table*}

\begin{figure}[htbp]
\centering
\includegraphics[width=0.8\columnwidth]{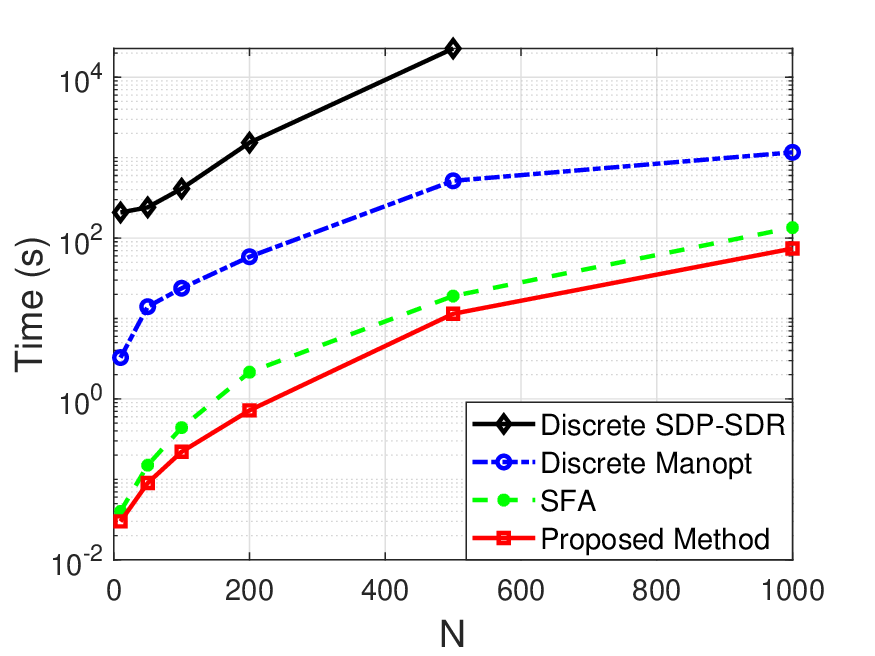}
\caption{The overall execution-time of $100$ trials for each $N$ in NLOS scenarios, with a quantization level of $B=1$, and the number of BS antennas is $M=32$.}
\label{fig:time}
\end{figure}

\section{RIS-Aided Experimental Field Trials}\label{Section8}
To evaluate the performance of the proposed algorithms in real-world scenarios, we conduct a series of experiments within a practical RIS-aided communication system. One of the key challenges in RIS-aided communications is the lack of dedicated signal processing capabilities in passive reflecting units, which makes traditional channel estimation methods impractical. However, in the context of geometrical optics models~\cite{mi2023towards}, the phase differences between various paths are calculated directly based on units' locations and configurations. This feature enables beamforming without the need for explicit channel estimation processes, making it more efficient and practical in real-world applications. To ensure a fair and accurate comparison of the performance of various methods, the phase configurations corresponding to different methods are first derived using geometrical optics models. Subsequently, these configurations are translated into control codebooks and seamlessly implemented on the controller of the RISs, ensuring their readiness for practical testing.

\begin{figure}[htbp]
\centering
\includegraphics[width=0.9\linewidth]{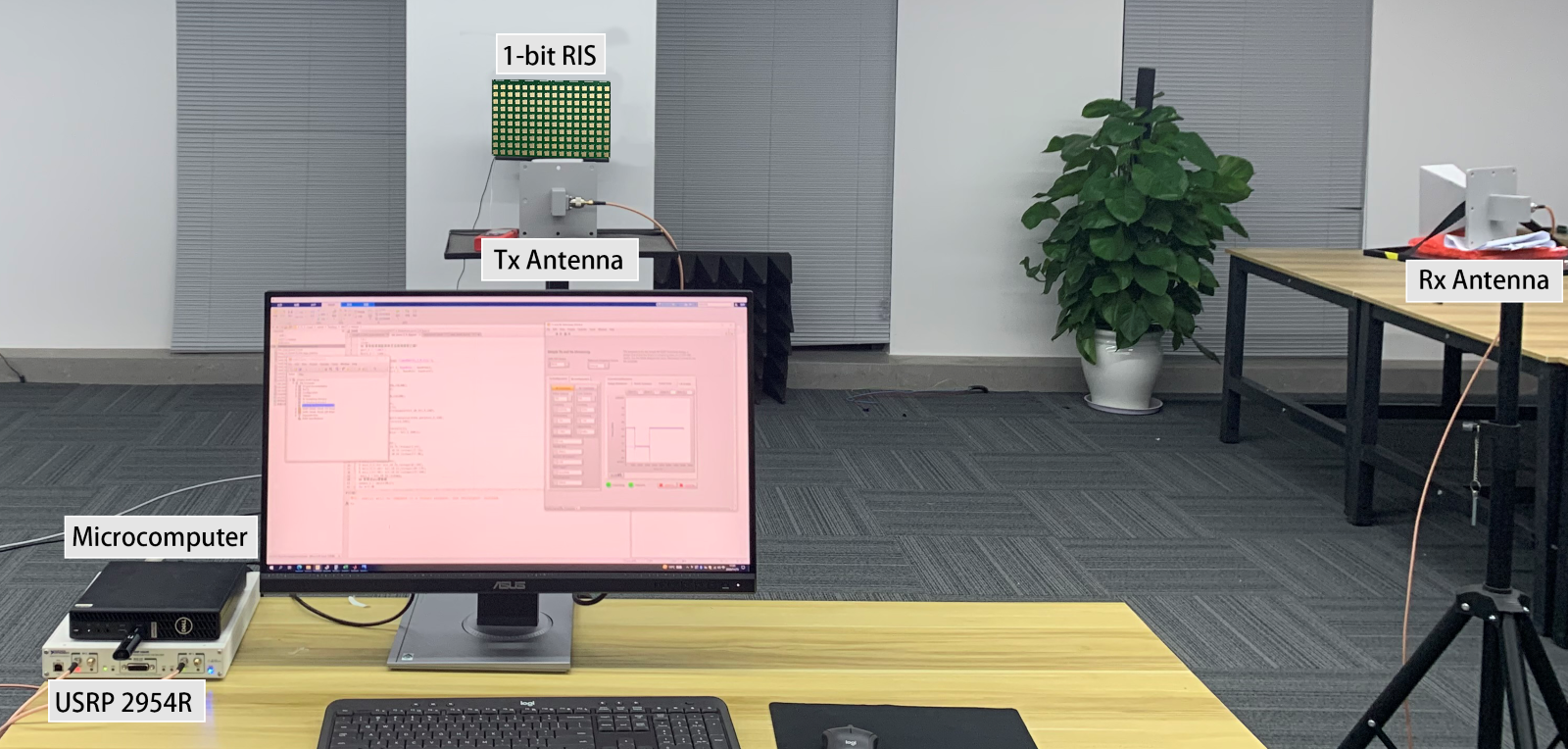}
\caption{The prototype of 1-bit RIS-aided wireless communication system.}
\label{system}
\end{figure}

We carry out all experiments utilizing a 1-bit RIS-aided communication system, as depicted in Figs.~\ref{system} and \ref{chamber}. The signal is generated and modulated using the USRP 2954R, transmitted through a horn antenna, reflected by the RIS, and finally received by another horn. To ensure the reliability of the evaluation, we calculate the average received signal power based on 8912 samples. The experiments are designed to encompass two different frequency types of RISs, providing a comprehensive evaluation and validation of the algorithm's performance under diverse scenarios.

\begin{figure}[tbp]
\centering
\includegraphics[width=0.9\columnwidth]{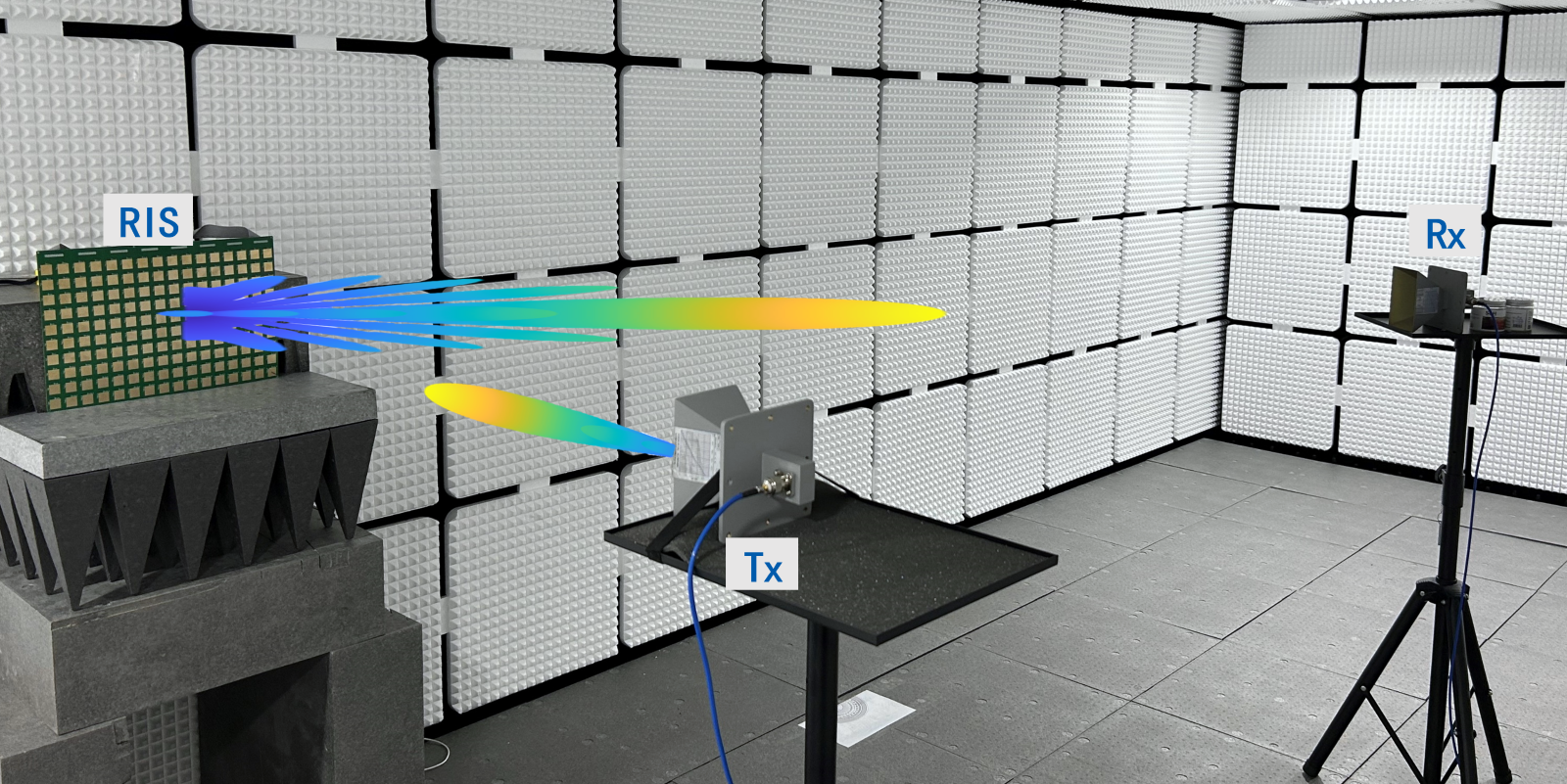}
\caption{Experiment conducted inside the anechoic chamber while using a 1-bit RIS operating at a frequency of 4.85 GHz.}
\label{chamber}
\end{figure}

\begin{figure}[tbp]
\centering
\includegraphics[width=0.9\linewidth]{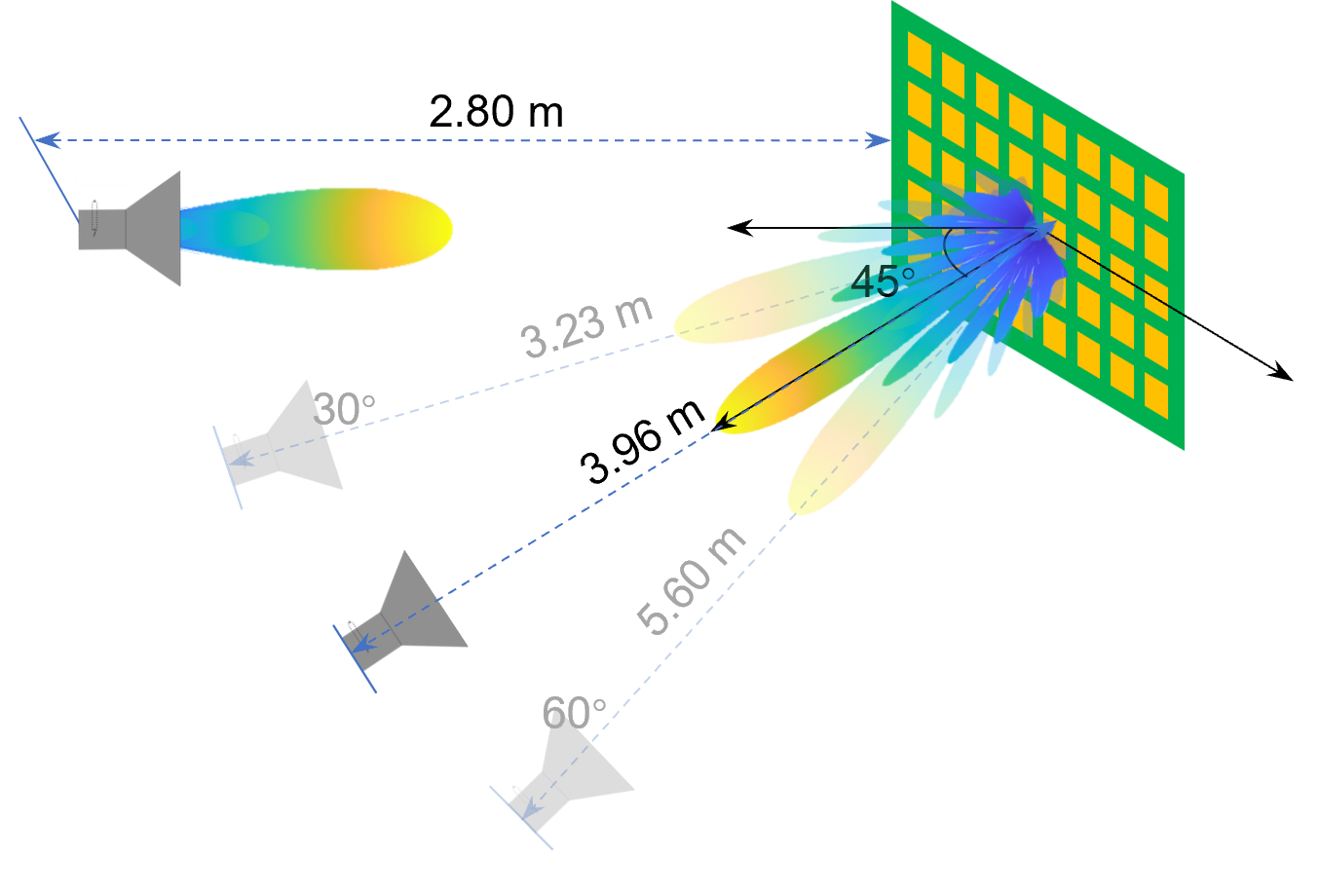}
\caption{Experiment setups while using the RIS operating at a frequency of 4.85 GHz. All devices are positioned at the same height to ensure that the azimuth angle $\phi$ remains equal to $0^\circ$ the Tx antenna is located at an elevation angle of $\theta^{\text{Arr}}=0^{\circ}$ with a distance of 2.8 meters to the center of RIS, the Rx antenna is positioned at direction angles of $\theta^{\text{Arr}}= 30^{\circ}, 45^{\circ}$, and $60^{\circ}$.}
\label{Experiment Test}
\end{figure}

\subsubsection{Anechoic Chamber Environment Testing}
We first conduct an experiment inside an anechoic chamber. The experiment aims to compare the received power before and after beamforming using the proposed algorithm. The 1-bit RIS utilized in this test operates at a frequency of 4.85 GHz, as illustrated in Fig.~\ref{chamber}. It comprises $10\times16$ reflecting units, with each unit spaced 0.027 meters apart. To simplify the experiments, all devices are positioned at the same height to ensure that the azimuth angle $\phi$ remains equal to $0^\circ$.

\begin{figure}[htbp]
  \centering
  \subfigure[]{
  \label{30}
  \includegraphics[width=.31\columnwidth]{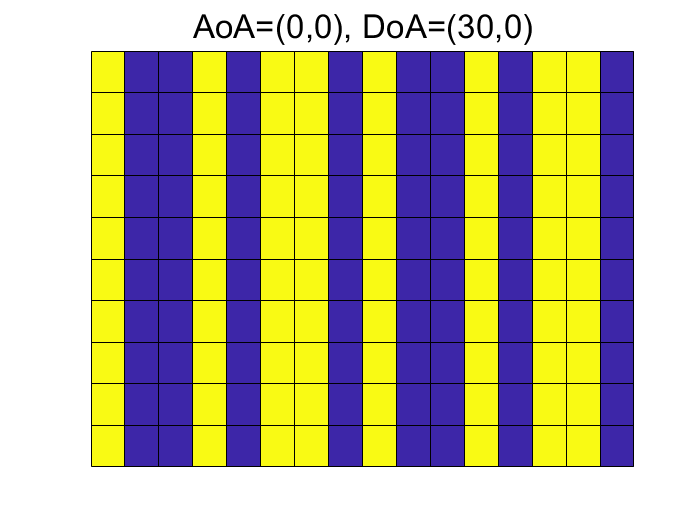}}
  \subfigure[]{
  \label{45}
  \includegraphics[width=.31\columnwidth]{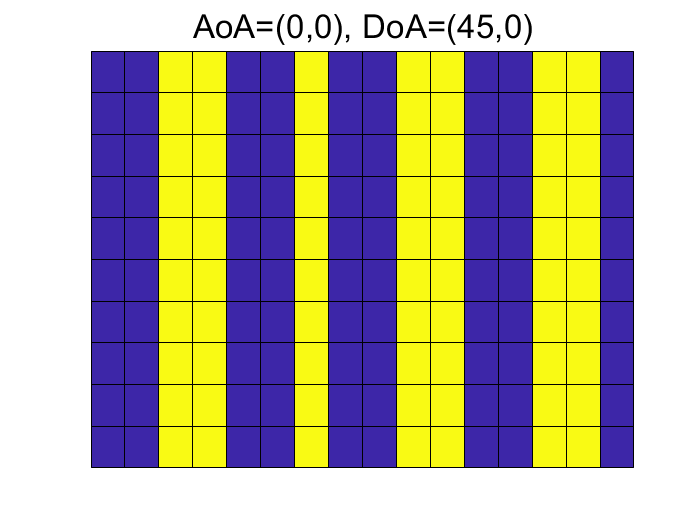}}
  \subfigure[]{
  \label{60}
  \includegraphics[width=.31\columnwidth]{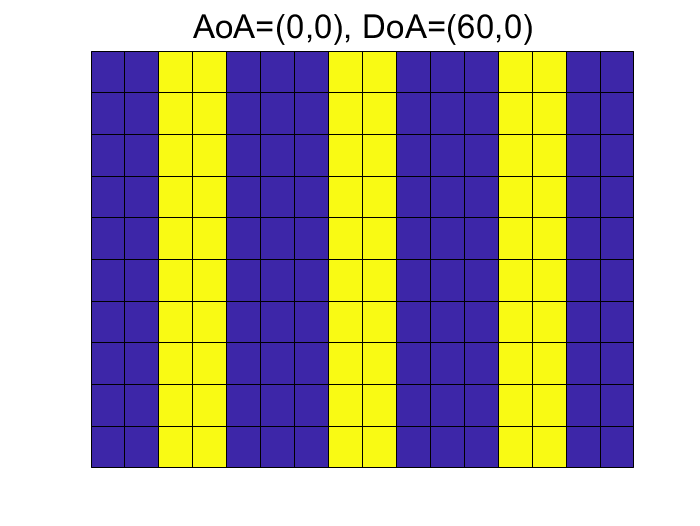}}
  \caption{RIS codebooks related to the optimization phase configurations by DaS when using the RIS operating at a frequency of 4.85 GHz. The Tx is located at the elevation angles $\theta^{\text{Arr}}=0^{\circ}$ with a distance of 2.8 meters to the center of RIS, the Rx is positioned at (a) $\theta^{\text{Arr}} = 30^{\circ}$, 3.23 meters. $\theta^{\text{Arr}} = 45^{\circ}$, 3.96 meters.  $\theta^{\text{Arr}} =60^{\circ}$, 5.6 meters.}
  \label{codebook}
\end{figure}

For each of the three experimental setups, we perform the proposed DaS method to obtain the optimal discrete phase configurations based on the geometrical optics model~\cite{mi2023towards}. These configurations are then transformed into control codebooks. As shown in Fig.\ref{codebook}. The bright yellow color represents the corresponding reflecting units in the configuration of $\Omega_i = 0$, with the controlling diodes in the ON state. Conversely, the blue color corresponds to the configuration of $\Omega_i = \pi$, indicating the OFF state. For clarity, the configuration with all diodes in the OFF state is denoted as "not-configured."

\begin{figure}[htbp]
  \centering
  \subfigure[]{
  \label{30degree}
  \includegraphics[width=.68\columnwidth]{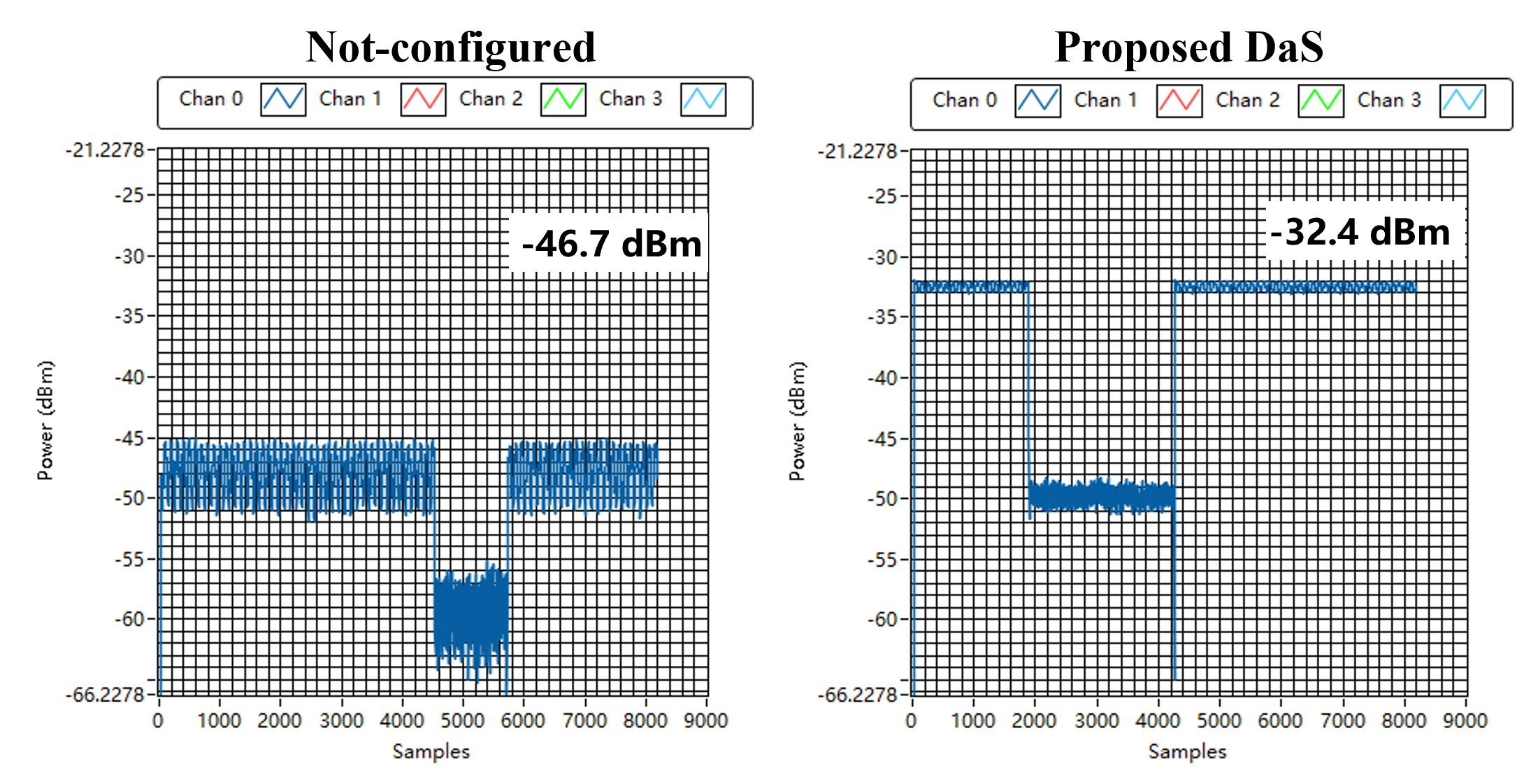}}
  \subfigure[]{
  \label{45degree}
  \includegraphics[width=.68\columnwidth]{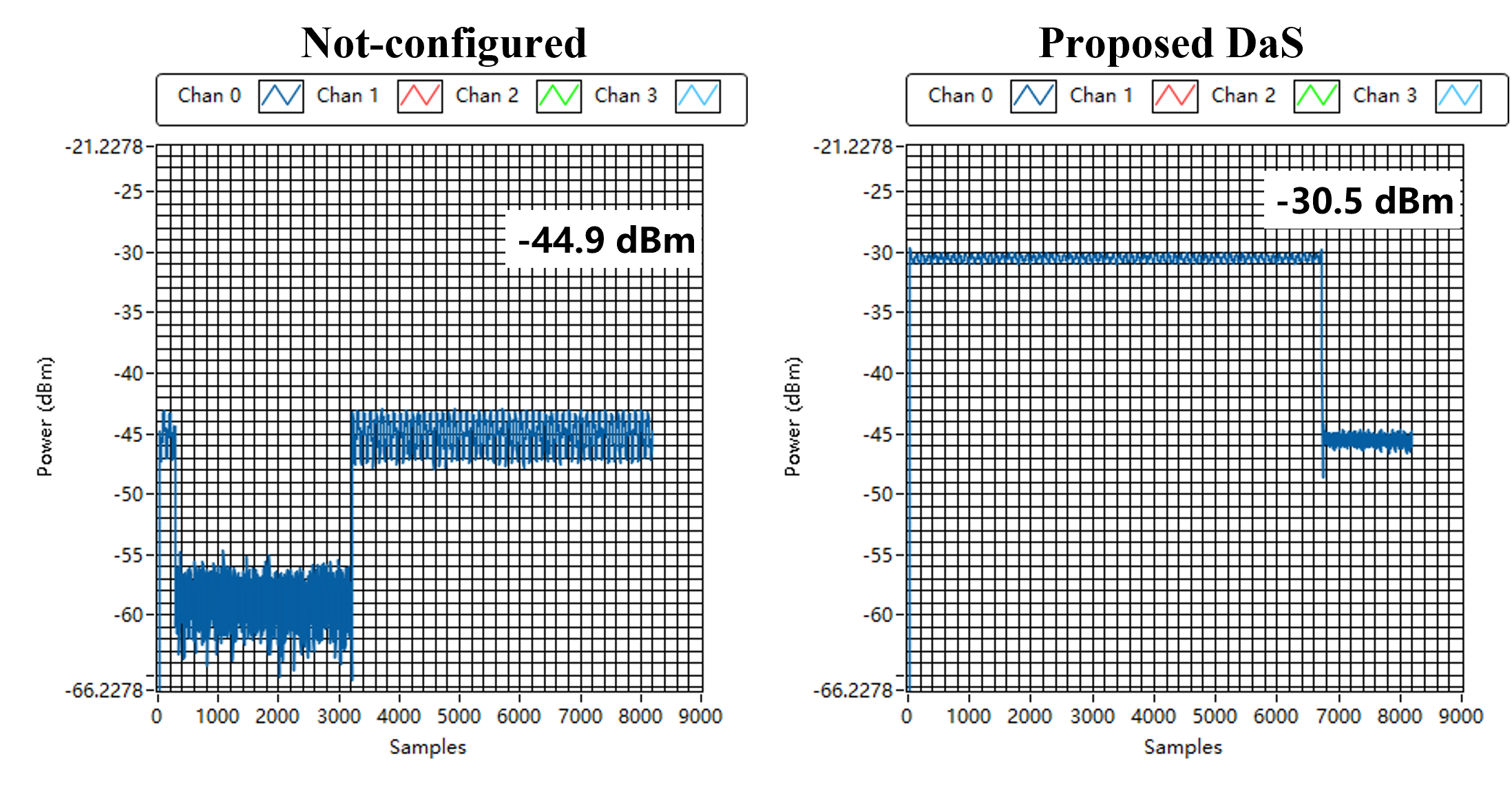}}
  \subfigure[]{
  \label{60degree}
  \includegraphics[width=.68\columnwidth]{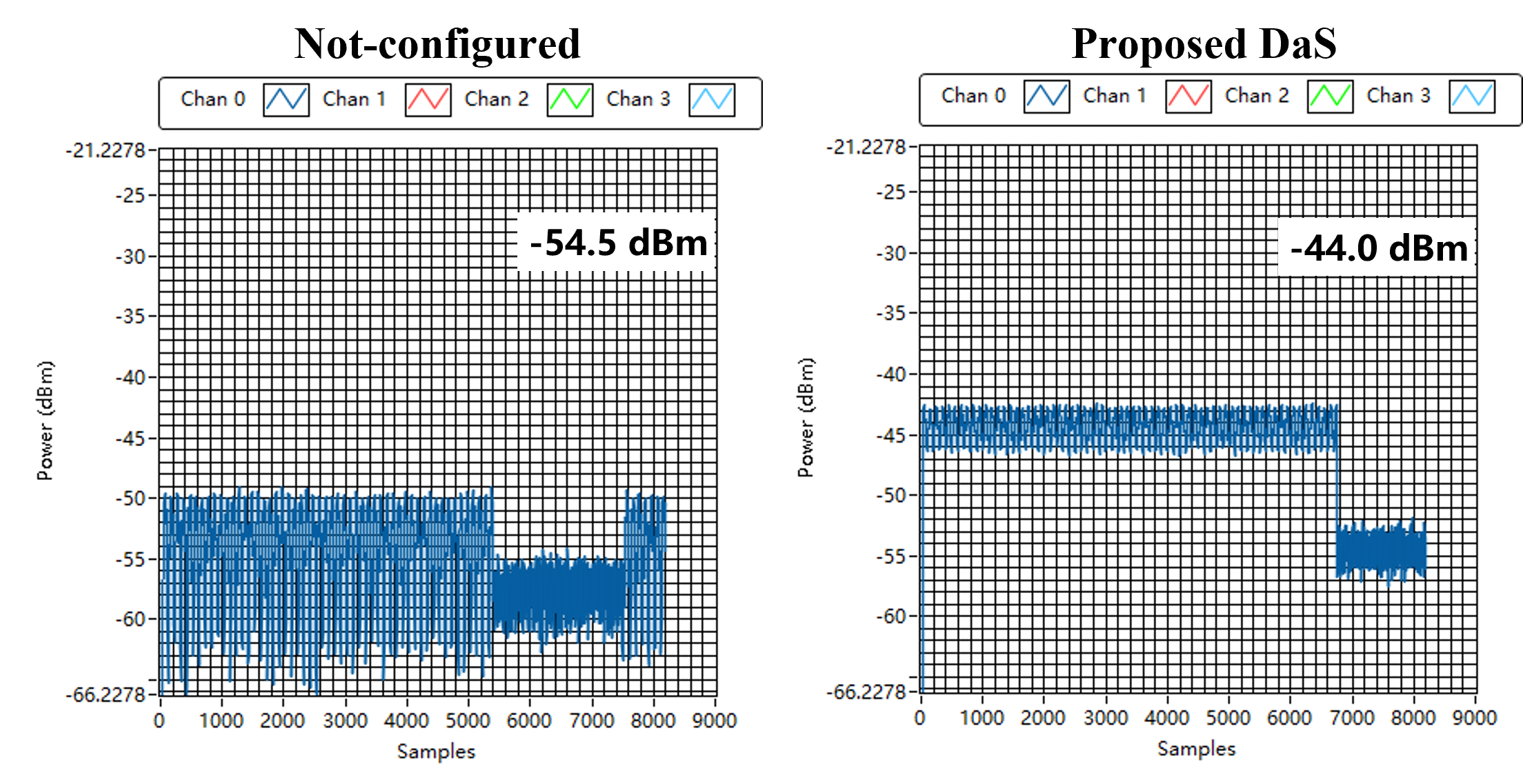}}
  \caption{The received signal power measurements at different locations during the experiment involving the RIS operating at a frequency of 4.85 GHz. The Tx keeps the distance at 2.8 meters in front of the RIS, with both elevation and azimuth angles set to $0^{\circ}$. The Rx is placed at (a) $\theta^{\text{Arr}}=30^{\circ}$, 3.23 meters. (b) $\theta^{\text{Arr}}=45^{\circ}$, 3.96 meters. (c) $\theta^{\text{Arr}}=60^{\circ}$, 5.6 meters.}
  \label{Repower}
\end{figure}

The experimental results in Fig.~\ref{Repower} highlight the remarkable performance of the optimized configurations derived through the proposed algorithm, resulting in substantial enhancements in signal strength. Across all of the experimental setups, the configurations designed by DaS achieve an average gain of 13.1 dB compared to no elaborate configurations.
\begin{figure}[htbp] 
  \centering
  \subfigure[]{
  \label{rand45}
  \includegraphics[width=.31\columnwidth]{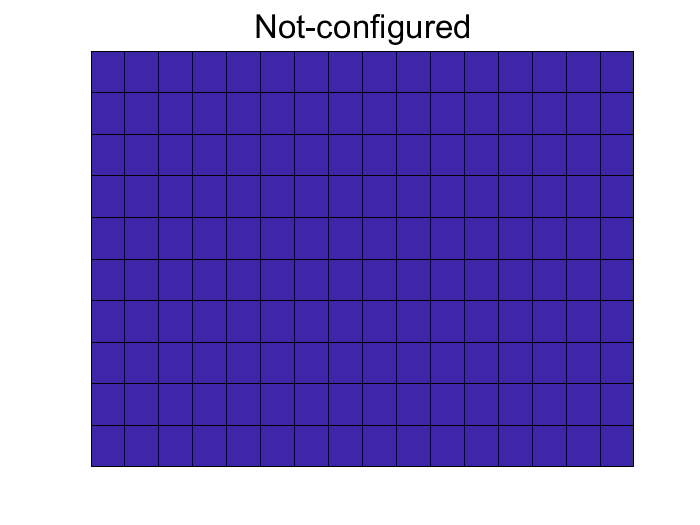}}
 \subfigure[]{
  \label{zero45}
  \includegraphics[width=.31\columnwidth]{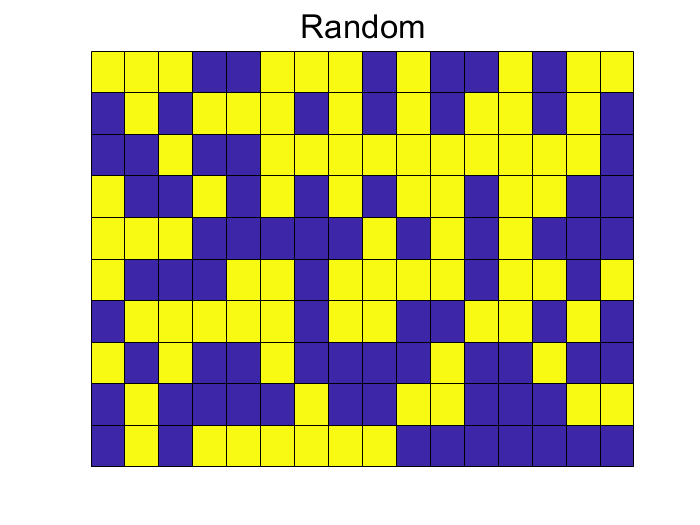}}
 \subfigure[]{
  \label{sdr45}
  \includegraphics[width=.31\columnwidth]{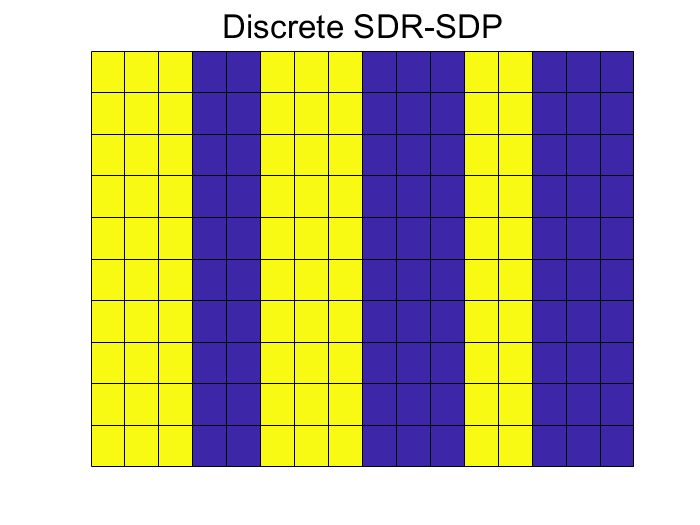}}
 \subfigure[]{
  \label{man45}
  \includegraphics[width=.31\columnwidth]{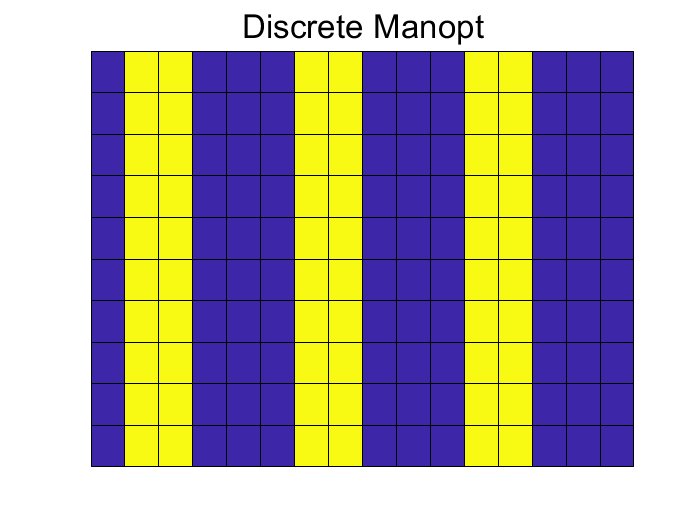}}
  \subfigure[]{
  \label{my45}
  \includegraphics[width=.31\columnwidth]{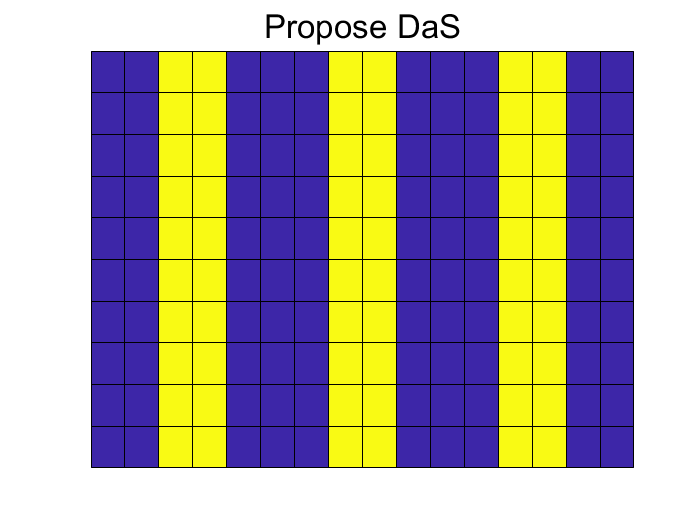}}
  \subfigure[]{
  \label{apx45}
  \includegraphics[width=.31\columnwidth]{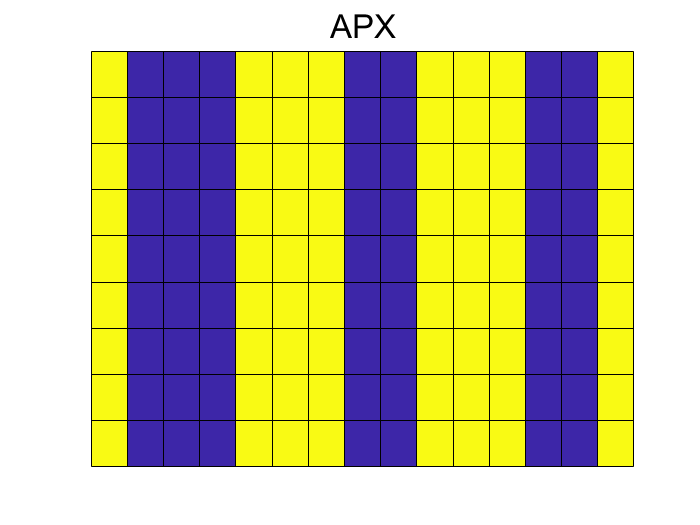}}
  \caption{The control codebooks correspond to the phase configurations obtained by various approaches at the operating frequency of 5.8 GHz. (a) Random. (b) Not-configured. (c) Discrete SDR-SDP. (d) Discrete Manopt. (e) Proposed DaS. (f) APX.}
\label{codebooks}
\end{figure}

\begin{table*}[t]
\centering
\begin{small}
\caption{Practical power gains over different approaches in 1-bit RIS operating at a frequency of 5.8 GHz} 
\centering
\setlength{\tabcolsep}{1.3mm}
\begin{tabular}{ccccccccc}
\toprule
Tx, Rx Position/Methods                         & Not-configured        &  Random    & Discrete SDR-SDP & Discrete Manop   & APX & Proposed DaS   \\
\midrule
Tx: 2.1 m, Rx: ($30^{\circ},0^{\circ})$ & $-47.29$ dBm  & $-44.81$ dBm  & $-39.83$ dBm    & $-40.59$ dBm     & $-40.54$ dBm  & $\mathbf{-32.79}$ dBm  \\
Tx: 2.1 m, Rx:  ($45^{\circ},0^{\circ})$ & $-49.42$ dBm  & $-44.68$ dBm  & $-33.09$ dBm    & $-33.21$ dBm     & $-32.52$ dBm  & $\mathbf{-31.63}$ dBm  \\
Tx: 2.1 m, Rx: ($60^{\circ},0^{\circ})$ & $-44.41$ dBm  & $-47.00$ dBm  & $-44.94$ dBm    & $-39.77$ dBm     & $-41.04$ dBm  & $\mathbf{-37.47}$ dBm   \\ 
Tx: 4.0 m, Rx: ($30^{\circ},0^{\circ})$ & $-40.00$ dBm  & $-45.14$ dBm  & $-43.74$ dBm    & $-42.70$ dBm     & $-40.78$ dBm  & $\mathbf{-39.91}$ dBm   \\
Tx: 4.0 m, Rx: ($45^{\circ},0^{\circ})$ & $-46.72$ dBm  & $-41.32$ dBm  & $-44.00$ dBm    & $-52.91$ dBm     & $-46.85$ dBm  & $\mathbf{-39.80}$ dBm    \\
Tx: 4.0 m, Rx: ($60^{\circ},0^{\circ})$ & $-52.21$ dBm  & $-48.11$ dBm  & $-46.58$ dBm    & $-44.32$ dBm     & $-46.50$ dBm  & $\mathbf{-43.75}$ dBm \\
\bottomrule
\label{Table-Performances}
\end{tabular}
\end{small}
\end{table*}

\subsubsection{Open Office Area Testing}
The primary objective of our second experiment is to compare the actual signal power gains achieved by various algorithms in an open office area. For this purpose, we consider the proposed DaS method, along with the discrete SDR-SDP and discrete Manopt methods. Additionally, we include two benchmark configurations: random phase configurations and not-configured phase configurations. The codebooks are visualized in Fig.~\ref{codebooks}. 

In this experiment, a 1-bit RIS operating at a frequency of 5.8 GHz is employed. The transmitting antenna is precisely placed 2.1 meters in front of the RIS board, with elevation angles of $\theta^{\text{Arr}}=0^{\circ}$. Conversely, the receiving antenna is set at a distance of 2.97 meters from the RIS, with an angle of $\theta^{\text{Dep}}=45^{\circ}$. Similarly, all devices are fixed at the same height.

The measurement results presented in Fig.~\ref{Result2.1-45} clearly demonstrate the effectiveness of the proposed DaS method, surpassing other competing methods by approximately 1 dB in terms of received signal power gain. Notably, when compared to no elaborate phase configurations, the proposed DaS method achieves a remarkable power gain of up to 18 dB.

We further conduct tests with the receiver positioned at direction angles $\theta^{\text{Arr}}=30^{\circ}$ and $\theta^{\text{Arr}}=60^{\circ}$, at distances of 2.42 meters and 4.20 meters from the RIS, respectively. Moreover, for comprehensiveness, we extend the transmitter-RIS distance to 4.0 meters and repeat the tests. The results are summarized in TABLE~\ref{Table-Performances}. It is evident from the table that the proposed approach achieves an increase in received signal power, outperforming other competing methods.

\begin{figure}[htbp]
\centering
\includegraphics[width=.98\columnwidth]{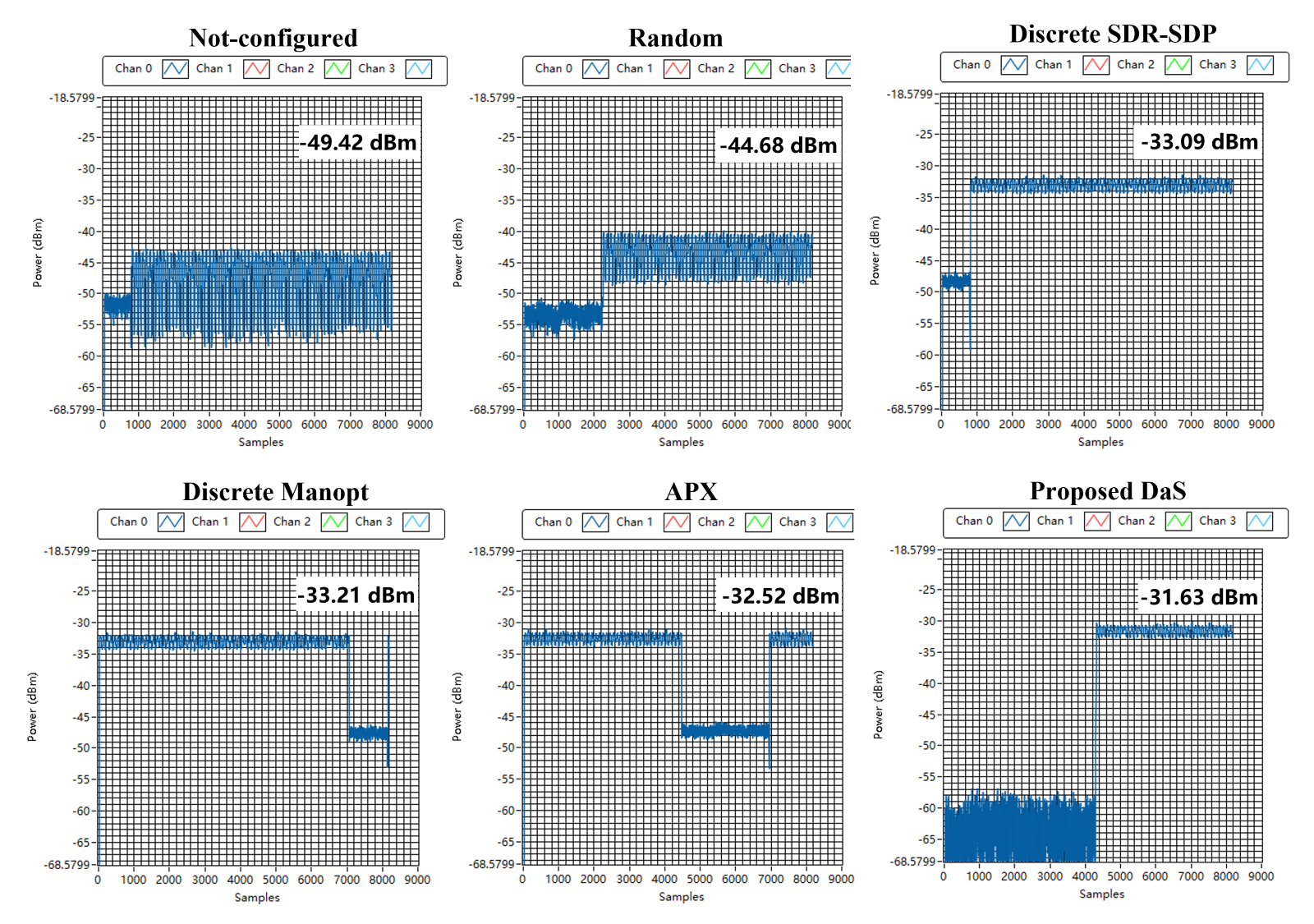}
\vspace{-0.4cm}
\caption{The received signal power measurements for different approaches in a 1-bit RIS-aided communication system operating at a frequency of 5.8 GHz. The Tx is located 2.1 meters in front of the RIS at an elevation angle of $\theta^{\text{Arr}}=0^{\circ}$, the Rx is positioned at an angle of $ \theta^{\text{Arr}}=45^{\circ}$,  with a distance of 2.97 meters to the RIS.}
\label{Result2.1-45}
\end{figure}

In summary, the simulation and experiment results have revealed several noteworthy findings.
\begin{itemize}
\item The proposed algorithm gives rise to the optimal solution in discrete beamforming for RIS. Both simulation and experimental results demonstrate its superior performance compared to other competing algorithms.
\item With the 1-bit quantization scheme performed, there is an approximate 3 dB reduction in received power compared to the continuous phase configurations. However, when using moderate resolution quantization, such as 4-bit and above, there is no noticeable difference between continuous and discrete phase configurations in terms of the received signal power. 
\end{itemize}

\section{Conclusion}\label{Section9}

This paper presents a novel framework dedicated to maximizing the $\ell_p$-norm with discrete uni-modular variable constraints. Our proposed alternating inner product maximization approach represents the first post-rounding lifting method capable of mitigating performance degradation caused by discrete quantization. Numerical simulations demonstrate superior performance in terms of SNR and execution-time compared to other competing methods. Finally, we validate the effectiveness of the alternating inner product maximization framework in beamforming through RISs using both numerical experiments and field trials on prototypes.

\section{Acknowledgment}

We would like to thank the anonymous referees of the initial version of this paper, who drew our attention to the reference \cite{ren2022linear}.

\bibliographystyle{IEEEtran}

\bibliography{Reference}

\begin{thebibliography}{10}
\providecommand{\url}[1]{#1}
\csname url@samestyle\endcsname
\providecommand{\newblock}{\relax}
\providecommand{\bibinfo}[2]{#2}
\providecommand{\BIBentrySTDinterwordspacing}{\spaceskip=0pt\relax}
\providecommand{\BIBentryALTinterwordstretchfactor}{4}
\providecommand{\BIBentryALTinterwordspacing}{\spaceskip=\fontdimen2\font plus
\BIBentryALTinterwordstretchfactor\fontdimen3\font minus
  \fontdimen4\font\relax}
\providecommand{\BIBforeignlanguage}[2]{{%
\expandafter\ifx\csname l@#1\endcsname\relax
\typeout{** WARNING: IEEEtran.bst: No hyphenation pattern has been}%
\typeout{** loaded for the language `#1'. Using the pattern for}%
\typeout{** the default language instead.}%
\else
\language=\csname l@#1\endcsname
\fi
#2}}
\providecommand{\BIBdecl}{\relax}
\BIBdecl

\bibitem{cui2014coding}
T.~J. Cui, M.~Q. Qi, X.~Wan, J.~Zhao, and Q.~Cheng, ``Coding metamaterials,
  digital metamaterials and programmable metamaterials,'' \emph{Light-Sci.
  Appl.}, vol.~3, no.~10, pp. e218--e218, Oct. 2014.

\bibitem{basar2019wireless}
E.~Basar, M.~Di~Renzo, J.~De~Rosny, M.~Debbah, M.-S. Alouini, and R.~Zhang,
  ``Wireless communications through reconfigurable intelligent surfaces,''
  \emph{IEEE Access}, vol.~7, pp. 116\,753--116\,773, Aug. 2019.

\bibitem{di2020smart}
M.~Di~Renzo \emph{et~al.}, ``Smart radio environments empowered by
  reconfigurable intelligent surfaces: How it works, state of research, and the
  road ahead,'' \emph{IEEE J. Sel. Areas Commun.}, vol.~38, no.~11, pp.
  2450--2525, Nov. 2020.

\bibitem{di2020hybrid}
B.~Di, H.~Zhang, L.~Song, Y.~Li, Z.~Han, and H.~V. Poor, ``Hybrid beamforming
  for reconfigurable intelligent surface based multi-user communications:
  Achievable rates with limited discrete phase shifts,'' \emph{IEEE J. Sel.
  Areas Commun.}, vol.~38, no.~8, pp. 1809--1822, Apr. 2020.

\bibitem{chen2023channel}
J.~Chen, Y.-C. Liang, H.~V. Cheng, and W.~Yu, ``Channel estimation for
  reconfigurable intelligent surface aided multi-user mmwave mimo systems,''
  \emph{IEEE Trans. Wirel. Commun.}, vol.~22, no.~10, pp. 6853--6869, Oct.
  2023.

\bibitem{yang2020secrecy}
L.~Yang, J.~Yang, W.~Xie, M.~O. Hasna, T.~Tsiftsis, and M.~Di~Renzo, ``Secrecy
  performance analysis of {RIS}-aided wireless communication systems,''
  \emph{IEEE Trans. Veh. Technol.}, vol.~69, no.~10, pp. 12\,296--12\,300, Oct.
  2020.

\bibitem{cui2019secure}
M.~Cui, G.~Zhang, and R.~Zhang, ``Secure wireless communication via intelligent
  reflecting surface,'' \emph{IEEE Wirel. Commun. Lett.}, vol.~8, no.~5, pp.
  1410--1414, Oct. 2019.

\bibitem{lin2020adaptive}
S.~Lin, B.~Zheng, G.~C. Alexandropoulos, M.~Wen, F.~Chen, and S.~sMumtaz,
  ``Adaptive transmission for reconfigurable intelligent surface-assisted
  {OFDM} wireless communications,'' \emph{IEEE J. Sel. Areas Commun.}, vol.~38,
  no.~11, pp. 2653--2665, Nov. 2020.

\bibitem{yang2020intelligent}
Y.~Yang, B.~Zheng, S.~Zhang, and R.~Zhang, ``Intelligent reflecting surface
  meets {OFDM}: Protocol design and rate maximization,'' \emph{IEEE Trans.
  Commun.}, vol.~68, no.~7, pp. 4522--4535, Jul. 2020.

\bibitem{li2020reconfigurable}
S.~Li, B.~Duo, X.~Yuan, Y.-C. Liang, and M.~Di~Renzo, ``Reconfigurable
  intelligent surface assisted {UAV} communication: Joint trajectory design and
  passive beamforming,'' \emph{IEEE Wirel. Commun. Lett.}, vol.~9, no.~5, pp.
  716--720, May. 2020.

\bibitem{mu2021intelligent}
X.~Mu, Y.~Liu, L.~Guo, J.~Lin, and H.~V. Poor, ``Intelligent reflecting surface
  enhanced multi-{UAV} {NOMA} networks,'' \emph{IEEE J. Sel. Areas Commun.},
  vol.~39, no.~10, pp. 3051--3066, Oct. 2021.

\bibitem{pan2020intelligent}
C.~Pan \emph{et~al.}, ``Intelligent reflecting surface aided {MIMO}
  broadcasting for simultaneous wireless information and power transfer,''
  \emph{IEEE J. Sel. Areas Commun.}, vol.~38, no.~8, pp. 1719--1734, Aug. 2020.

\bibitem{wu2020joint}
Q.~Wu and R.~Zhang, ``Joint active and passive beamforming optimization for
  intelligent reflecting surface assisted {SWIPT} under {QoS} constraints,''
  \emph{IEEE J. Sel. Areas Commun.}, vol.~38, no.~8, pp. 1735--1748, Aug. 2020.

\bibitem{bai2020latency}
T.~Bai, C.~Pan, Y.~Deng, M.~Elkashlan, A.~Nallanathan, and L.~Hanzo, ``Latency
  minimization for intelligent reflecting surface aided mobile edge
  computing,'' \emph{IEEE J. Sel. Areas Commun.}, vol.~38, no.~11, pp.
  2666--2682, Nov. 2020.

\bibitem{mao2022reconfigurable}
S.~Mao \emph{et~al.}, ``Reconfigurable intelligent surface-assisted secure
  mobile edge computing networks,'' \emph{IEEE Trans. Veh. Technol.}, vol.~71,
  no.~6, pp. 6647--6660, Jun. 2022.

\bibitem{elzanaty2021reconfigurable}
A.~Elzanaty, A.~Guerra, F.~Guidi, and M.-S. Alouini, ``Reconfigurable
  intelligent surfaces for localization: Position and orientation error
  bounds,'' \emph{IEEE Trans. Signal Process.}, vol.~69, pp. 5386--5402, Aug.
  2021.

\bibitem{chen2022efficient}
P.~Chen, Z.~Chen, B.~Zheng, and X.~Wang, ``Efficient {DOA} estimation method
  for reconfigurable intelligent surfaces aided {UAV} swarm,'' \emph{IEEE
  Trans. Signal Process.}, vol.~70, pp. 743--755, Jan. 2022.

\bibitem{xiong2023ris}
\BIBentryALTinterwordspacing
R.~Xiong \emph{et~al.} {RIS}-aided wireless communication in real-world:
  Antennas design, prototyping, beam reshape and field trials.
  arXiv:2303.03287. Mar. 2023. [Online]. Available:
  \url{https://arxiv.org/abs/2303.03287}
\BIBentrySTDinterwordspacing

\bibitem{arun2020rfocus}
V.~Arun and H.~Balakrishnan, ``{RFocus}: Beamforming using thousands of passive
  antennas.'' in \emph{Proc. 17 th USENIX Symp. Networked Syst. Design
  Implement.}, Feb. 2020, pp. 1047--1061.

\bibitem{tran2020demonstration}
N.~M. Tran \emph{et~al.}, ``Demonstration of reconfigurable metasurface for
  wireless communications,'' in \emph{Proc. IEEE Wireless Commun. Netw.
  Conf.}\hskip 1em plus 0.5em minus 0.4em\relax IEEE, May. 2020, pp. 1--2.

\bibitem{pei2021ris}
X.~Pei \emph{et~al.}, ``{RIS}-aided wireless communications: Prototyping,
  adaptive beamforming, and indoor/outdoor field trials,'' \emph{IEEE Trans.
  Commun.}, vol.~69, no.~12, pp. 8627--8640, Dec. 2021.

\bibitem{dai2020reconfigurable}
L.~Dai \emph{et~al.}, ``Reconfigurable intelligent surface-based wireless
  communications: Antenna design, prototyping, and experimental results,''
  \emph{IEEE Access}, vol.~8, pp. 45\,913--45\,923, Mar. 2020.

\bibitem{rains2022high}
J.~Rains \emph{et~al.}, ``High-resolution programmable scattering for wireless
  coverage enhancement: an indoor field trial campaign,'' \emph{IEEE Trans.
  Antennas Propag.}, vol.~71, no.~1, pp. 518--530, Jan. 2023.

\bibitem{kishiyama2021research}
K.~Daisuke, H.~Yuto, M.~Kensuke, and K.~Yoshihisa, ``Research of transparent
  {RIS} technology toward {5G} evolution \& {6G},'' \emph{NTT DOCOMO Tech. J.},
  vol.~19, no.~11, pp. 26--34, Nov. 2021.

\bibitem{wu2019intelligent}
Q.~Wu and R.~Zhang, ``Intelligent reflecting surface enhanced wireless network
  via joint active and passive beamforming,'' \emph{IEEE Trans. Wirel.
  Commun.}, vol.~18, no.~11, pp. 5394--5409, Aug. 2019.

\bibitem{wu2023intelligent}
\BIBentryALTinterwordspacing
Q.~Wu \emph{et~al.} Intelligent surfaces empowered wireless network: Recent
  advances and the road to {6G}. arXiv:2312.16918. Mar. 2024. [Online].
  Available: \url{https://arxiv.org/abs/2312.16918}
\BIBentrySTDinterwordspacing

\bibitem{zhang2006complex}
S.~Zhang and Y.~Huang, ``Complex quadratic optimization and semidefinite
  programming,'' \emph{SIAM J. Optim.}, vol.~16, no.~3, pp. 871--890, 2006.

\bibitem{soltanalian2014designing}
M.~Soltanalian and P.~Stoica, ``Designing unimodular codes via quadratic
  optimization,'' \emph{IEEE Trans. Signal Process.}, vol.~62, no.~5, pp.
  1221--1234, Mar. 2014.

\bibitem{liang2016unimodular}
J.~Liang, H.~C. So, J.~Li, and A.~Farina, ``Unimodular sequence design based on
  alternating direction method of multipliers,'' \emph{IEEE Trans. Signal
  Process.}, vol.~64, no.~20, pp. 5367--5381, Oct. 2016.

\bibitem{yang2021beamforming}
Z.~Yang and Y.~Zhang, ``Beamforming optimization for {RIS}-aided {SWIPT} in
  cell-free {MIMO} networks,'' \emph{China Commun.}, vol.~18, no.~9, pp.
  175--191, Sep. 2021.

\bibitem{wang2022sca}
T.~Wang, F.~Fang, and Z.~Ding, ``An {SCA} and relaxation based energy
  efficiency optimization for multi-user ris-assisted {NOMA} networks,''
  \emph{IEEE Trans. Veh. Technol.}, vol.~71, no.~6, pp. 6843--6847, Jun. 2022.

\bibitem{kumar2022novel}
V.~Kumar, R.~Zhang, M.~Di~Renzo, and L.-N. Tran, ``A novel {SCA}-based method
  for beamforming optimization in {IRS/RIS}-assisted {MU-MISO} downlink,''
  \emph{IEEE Wirel. Commun. Lett.}, vol.~12, no.~2, pp. 297--301, Nov. 2022.

\bibitem{zhou2020robust}
G.~Zhou, C.~Pan, H.~Ren, K.~Wang, M.~Di~Renzo, and A.~Nallanathan, ``Robust
  beamforming design for intelligent reflecting surface aided {MISO}
  communication systems,'' \emph{IEEE Wirel. Commun. Lett.}, vol.~9, no.~10,
  pp. 1658--1662, Oct. 2020.

\bibitem{shen2019secrecy}
H.~Shen, W.~Xu, S.~Gong, Z.~He, and C.~Zhao, ``Secrecy rate maximization for
  intelligent reflecting surface assisted multi-antenna communications,''
  \emph{IEEE Commun. Lett.}, vol.~23, no.~9, pp. 1488--1492, Sep. 2019.

\bibitem{bai2021reconfigurable}
T.~Bai, C.~Pan, C.~Han, and L.~Hanzo, ``Reconfigurable intelligent surface
  aided mobile edge computing,'' \emph{IEEE Wirel. Commun.}, vol.~28, no.~6,
  pp. 80--86, Dec. 2021.

\bibitem{salem2022active}
A.~A. Salem, M.~H. Ismail, and A.~S. Ibrahim, ``Active reconfigurable
  intelligent surface-assisted {MISO} integrated sensing and communication
  systems for secure operation,'' \emph{IEEE Trans. Veh. Technol.}, vol.~72,
  no.~4, pp. 4919--4931, Apr 2023.

\bibitem{absil2008optimization}
P.-A. Absil, R.~Mahony, and R.~Sepulchre, \emph{Optimization algorithms on
  matrix manifolds}.\hskip 1em plus 0.5em minus 0.4em\relax Princeton, NJ, USA:
  Princeton Univ. Press, 2009.

\bibitem{yu2019miso}
X.~Yu, D.~Xu, and R.~Schober, ``{MISO} wireless communication systems via
  intelligent reflecting surfaces,'' in \emph{Proc. IEEE Int. Conf. Commun.
  China}.\hskip 1em plus 0.5em minus 0.4em\relax IEEE, Aug. 2019, pp. 735--740.

\bibitem{wang2020intelligent}
P.~Wang, J.~Fang, X.~Yuan, Z.~Chen, and H.~Li, ``Intelligent reflecting
  surface-assisted millimeter wave communications: Joint active and passive
  precoding design,'' \emph{IEEE Trans. Veh. Technol.}, vol.~69, no.~12, pp.
  14\,960--14\,973, Dec. 2020.

\bibitem{zheng2020intelligent}
B.~Zheng, Q.~Wu, and R.~Zhang, ``Intelligent reflecting surface-assisted
  multiple access with user pairing: {NOMA} or {OMA}?'' \emph{IEEE Commun.
  Lett.}, vol.~24, no.~4, pp. 753--757, Apr. 2020.

\bibitem{you2020channel}
C.~You, B.~Zheng, and R.~Zhang, ``Channel estimation and passive beamforming
  for intelligent reflecting surface: Discrete phase shift and progressive
  refinement,'' \emph{IEEE J. Sel. Areas Commun.}, vol.~38, no.~11, pp.
  2604--2620, Nov. 2020.

\bibitem{qiao2020secure}
J.~Qiao and M.-S. Alouini, ``Secure transmission for intelligent reflecting
  surface-assisted mmwave and terahertz systems,'' \emph{IEEE Wirel. Commun.
  Lett.}, vol.~9, no.~10, pp. 1743--1747, Oct. 2020.

\bibitem{wu2019beamforming}
Q.~Wu and R.~Zhang, ``Beamforming optimization for wireless network aided by
  intelligent reflecting surface with discrete phase shifts,'' \emph{IEEE
  Trans. Commun.}, vol.~68, no.~3, pp. 1838--1851, May 2020.

\bibitem{zhang2022configuring}
Y.~Zhang, K.~Shen, S.~Ren, X.~Li, X.~Chen, and Z.-Q. Luo, ``Configuring
  intelligent reflecting surface with performance guarantees: Optimal
  beamforming,'' \emph{IEEE J. Sel. Top. Signal Process.}, vol.~16, no.~5, pp.
  967--979, Aug. 2022.

\bibitem{ren2022linear}
S.~Ren, K.~Shen, X.~Li, X.~Chen, and Z.-Q. Luo, ``A linear time algorithm for
  the optimal discrete irs beamforming,'' \emph{IEEE Wirel. Commun. Lett.},
  vol.~12, no.~3, pp. 496--500, May. 2023.

\bibitem{tse2005fundamentals}
D.~Tse and P.~Viswanath, \emph{Fundamentals of wireless communication}.\hskip
  1em plus 0.5em minus 0.4em\relax Cambridge, U.K: Cambridge Univ. Press, 2005.

\bibitem{mi2023towards}
T.~Mi, J.~Zhang, R.~Xiong, Z.~Wang, P.~Zhang, and R.~C. Qiu, ``Towards
  analytical electromagnetic models for reconfigurable intelligent surfaces,''
  \emph{IEEE Trans. Wirel. Commun.}, Sep. 2023, early access, doi:
  \href{https://doi.org/10.1109/TWC.2023.3315580}{10.1109/TWC.2023.3315580}.

\end{thebibliography}

\end{document}